\theoremstyle{definition}
\newtheorem{definition}{Definition}[section]
\newtheorem{theorem}{Theorem}[section]
\theoremstyle{remark}
\def\BibTeX{{\rm B\kern-.05em{\sc i\kern-.025em b}\kern-.08em
    T\kern-.1667em\lower.7ex\hbox{E}\kern-.125emX}}
\begin{document}

\title{SoK: Speedy Secure Finality
}

\author{\IEEEauthorblockN{Yash Saraswat}
\textit{Indian Institute of Technology, Roorkee, India}
\and
\IEEEauthorblockN{Abhimanyu Nag}
\textit{University of Alberta, Canada}}

\maketitle

\begin{abstract}
While Ethereum has successfully achieved dynamic availability along with safety, a fundamental delay remains between block production and immutable finality. In Ethereum's current Gasper protocol, this latency extends to approximately 15 minutes, exposing the network to ex-ante reorg attacks, leading to MEV capture and limiting the efficiency of economic settlement. This inherent delay along with cracks in Gasper has catalyzed the emergence of ``Speedy Secure Finality" (SSF): research efforts dedicated to minimizing confirmation times without compromising rigorous security guarantees. This work presents a survey of the state of the art on the evolution of fast finality protocols. We define the core theoretical primitives necessary for understanding this landscape, including reorg resilience and the generalized sleepy model, and trace their development from Goldfish to RLMD-GHOST. We then analyze the practical trade-offs of Single Slot Finality against the communication bottlenecks of large validator sets. Finally, we survey the 3-Slot Finality (3SF) protocol as a pragmatic convergence of these efforts, bridging the gap between theoretical safety guarantees and the engineering constraints of the Ethereum network.
\end{abstract}

\begin{IEEEkeywords}
Ethereum, Proof-of-Stake, Fast Finality, Byzantine Fault Tolerance, Consensus, Distributed Systems.
\end{IEEEkeywords}

\section{Introduction}

Consensus protocols enable a distributed system vis-à-vis blockchains to maintain a consistent replicated state in the presence of faults, adversarial behavior, and network uncertainty. In blockchain systems, consensus further serves as the mechanism by which a globally ordered ledger is constructed under open participation and economic incentives. Two central properties governing the security and usability of such systems are \emph{availability}: the guarantee that block production does not halt as long an honest majority of validators are online, and \emph{finality}: the guarantee that once a block is confirmed, it will not be reverted in any admissible execution. A detailed discussion about consensus protocols can be found in \cite{bano2019sok,arafatsok}.
\\\\
Over the past two decades, consensus protocols have evolved along multiple, often conflicting design axes. \textit{Classical Byzantine Fault Tolerant} (BFT) (see \cite{zhong2023byzantine} for a comprehensive survey) protocols achieve deterministic finality under partial synchrony by assuming a fixed replica set with a bounded fraction of Byzantine faults. Nakamoto style longest chain protocols, by contrast, operate under dynamic participation and unknown resource levels, achieving adaptivity and liveness at the cost of replacing finality with probabilistic confirmation. This tension is fundamental where Lewis-Pye and Roughgarden \cite{resourcepool} formalize an impossibility result showing that adaptivity and finality cannot be simultaneously achieved in the unsized, partially synchronous setting.
\\\\
Modern proof of stake blockchains \cite{saleh2021blockchain} combine elements from both paradigms that separate availability from finality. Ethereum’s consensus protocol, Gasper \cite{gasper}, is the most prominent example, pairing a dynamically available fork-choice rule with a BFT-style finality gadget in an ebb-and-flow
architecture. While such designs have proven robust in practice, they introduce choices between confirmation latency, reorganization risk, and implementation complexity. Understanding these trade-offs, and the extent to which they can be optimized is the focus of this paper.
\\\\
\emph{Fast finality} (see \cite{anceaume2020finality} and applications in \cite{chou2025minimmit} and \cite{pan2021plume}) refers to protocols that aim to
minimize the latency between block proposal and irrevocable confirmation, subject to the constraints imposed by participation dynamics and network assumptions. Ethereum’s Gasper protocol achieves finality on the order of tens of slots, creating a window during which blocks are vulnerable to reorganization and economically motivated attacks. Recent work seeks to shrink this window by refining fork-choice rules, vote handling, and participation models, without abandoning dynamic availability. As far as the authors know, there is no concrete documentation of the fast finality design model in blockchains and this SoK systematically surveys this space through the taxonomy outlined above.
\\\\
We begin by formalizing the relevant network and fault
models, including partial synchrony and the sleepy model. We then analyze Ethereum’s Gasper protocol as a canonical ebb-and-flow construction. Building on this foundation, we examine recent proposals for faster finality, including Goldfish, RLMD-GHOST, single-slot finality and then three-slot finality protocols. For each class of designs, we identify the underlying assumptions, formal guarantees, and implementation constraints that govern their viability in large-scale permissionless systems.
\\\\
Our objective is not to propose a new protocol, but to clarify the structure of the design space, identify the fundamental trade-offs that shape it, and provide a principled framework for evaluating future proposals for speedy secure finality.

\section{Background}

This section outlines the established distributed systems primitives necessary to analyze consensus and finality in decentralized and permissionless networks. We aim to provide a succinct overview of the seminal research that defines the technical framework for achieving speedy secure finality.

\subsection{Network Assumptions}
When working with distributed systems and consensus protocols, there must be an assumption made about the network delay of messages. If a message is sent from one node to another, then it is important for us to make an assumption about when to expect a reply. If the connection between the nodes is damaged for some reason, it is possible we may never receive a reply. 
\\\\
Conversely, if a node is adversarial, it may wish to withhold that message for some definite or indefinite amount of time for malicious purposes. Both of these scenarios are indistinguishable and hence, it is necessary to assume some message delay to formalize any consensus protocol. Based on these assumption, a network can be classified into three models \cite{monge2003theories}:

\begin{definition}[Synchronous Network]
     A \textit{synchronous network} is one where the network delay $\Delta$ is finite and known i.e for any message, the adversary can delay a message by at most a known $\Delta < \infty$. 
\end{definition}
This type of model is easy to reason about in theory but is often not practical because of its strict assumptions. A longer time bound might compromise the latency of the network while a shorter one might make it too abstract and impractical for any real world application.

\begin{definition}[Asynchronous Network]
    An \textit{asynchronous network} is one where $\Delta$ is finite but unknown. This means that for any message, the adversary will deliver the message eventually but there is no assumption made about the time bound of the delay.
\end{definition}
As a result of this assumption, these protocols are often very robust but are more complex and harder to reason about. Such models are subject to fundamental constraints, most notably the \textit{Fischer-Lynch-Patterson (FLP) impossibility result} \cite{flpimposibility}.

\begin{definition}[Partially-Synchronous Network]
    A \textit{partially-synchronous network} is one where $\Delta$ is a finite and known time bound following a period known as the \textit{global stabilization time} $\text{GST}$ \cite{gst}
\end{definition}

A message sent at any given time $T$ must be delivered at most by time  $\Delta$ + $\text{max}(T, GST)$. In essence, the network behaves asynchronously until $\text{GST}$, after which it behaves synchronously. Crucially, the adversary can delay $\text{GST}$ by any finite but unknown time bound and there is no way for the network to know if $\text{GST}$ has occurred or not. This model aims to reconcile the safety of asynchronous systems with the liveness of synchronous ones by accounting for transient network instability.

\subsection{Practical Byzantine Fault Tolerance (pBFT)}

Practical Byzantine Fault Tolerance (pBFT) was introduced by Castro and Liskov \cite{pbft} in 1999 to achieve fault-tolerant state machine replication in an adversarial environment. It replicates a deterministic state machine across $n$ replicas of which at most $f$ may be Byzantine. The network is partially synchronous but supports authenticated reliable channels. The protocol’s fault tolerance limit is derived from the necessity to ensure a majority quorum of honest nodes under the worst-case scenario. Assuming $f$ nodes exhibit Byzantine behavior, up to $f$ distinct nodes may incur non-adversarial crash faults. The following summarizes the core mechanics of the protocol as defined in \cite{pbft}.

\begin{theorem}[Fault bound]
\label{thm:pbft-bound}
pBFT satisfies safety and liveness if and only if $n \ge 3f+1$.
\end{theorem}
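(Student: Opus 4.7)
The plan is to establish the biconditional by treating the two directions separately: necessity of the bound is a model-theoretic lower bound that is independent of the specific protocol, while sufficiency is a constructive correctness argument for the pBFT rules themselves. Both rest on a single combinatorial tool, namely quorum intersection, so the proof reduces to carefully counting honest replicas in pairwise intersections under the worst-case adversarial schedule permitted by the partially-synchronous model.

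For the necessity direction ($n \ge 3f+1$ is required), I would run a standard indistinguishability argument. Before GST, the adversary can delay messages between honest replicas arbitrarily, so up to $f$ honest replicas are indistinguishable from silent crashed replicas. Liveness therefore forces decisions to be reachable from quorums of size at most $n-f$. For safety, any two decision quorums must share at least one \emph{honest} witness; otherwise an equivocating Byzantine coalition of size $f$ could support two disjoint quorums in two executions that are indistinguishable to honest replicas, yielding conflicting commits. Two quorums of size $n-f$ intersect in at least $2(n-f) - n = n - 2f$ replicas, of which at most $f$ may be Byzantine, so honesty of the intersection requires $n - 2f > f$, i.e., $n \ge 3f+1$.

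For the sufficiency direction, I would instantiate the pBFT protocol at the tight bound $n = 3f+1$ with quorum size $2f+1$ and verify the three standard invariants. First, any two quorums intersect in at least $2(2f+1) - (3f+1) = f+1$ replicas, so at least one honest replica lies in every intersection. This gives \emph{intra-view safety}: an honest replica that commits $v$ in view $\nu$ has witnessed a prepare quorum for $v$, and the honest witness in the intersection prevents any other quorum from preparing a conflicting value in the same view. Second, \emph{inter-view safety} follows by propagating prepared certificates through view-change messages: any new-view certificate is assembled from $2f+1$ view-change reports, and the $f+1$ honest replicas among them carry forward the highest prepared certificate, so a value committed in $\nu$ is re-proposed in every $\nu' > \nu$. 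Third, \emph{liveness} after GST follows because the $n-f = 2f+1$ honest replicas alone form a quorum, so an honest primary can drive pre-prepare, prepare, and commit to completion within a bounded number of $\Delta$-rounds, and the view-change timer eventually rotates past every Byzantine primary.

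The main obstacle is the inter-view safety step. Intra-view safety and liveness follow almost mechanically from the intersection count, but showing that commitments are preserved across arbitrarily many view changes requires an inductive invariant on the highest prepared certificate carried by view-change messages, and this is precisely where the $f+1$ intersection lower bound is tight. I would formalize the argument as an induction on view numbers rather than rederiving the full message-level analysis of Castro and Liskov \cite{pbft}, keeping quorum intersection as the single load-bearing combinatorial lemma throughout.
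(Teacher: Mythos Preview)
Your proposal is correct and rests on the same load-bearing lemma as the paper, namely that any two quorums of size $2f+1$ intersect in at least $f+1$ replicas and therefore share an honest witness whenever $n \ge 3f+1$. The paper's own argument is in fact only a two-line sketch of exactly this quorum-intersection count, so your treatment---separating the biconditional, giving an indistinguishability lower bound for necessity, and carrying an inter-view safety induction through view changes for sufficiency---goes well beyond what the paper provides while remaining fully compatible with it.
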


\begin{proof}[Proof sketch]
The protocol operates by forming quorums of size $2f+1$. If $Q_1$ and $Q_2$ are such quorums, then $|Q_1 \cap Q_2| \ge f+1$ whenever $n \ge 3f+1$. Since at most $f$ replicas are faulty, every quorum intersection contains at least one honest replica. The bound $n \ge 3f+1$ guarantees that for any two quorums $Q_1,Q_2 \subseteq [n]$ with $|Q_1|=|Q_2|=2f+1$, we have $|Q_1 \cap Q_2| \ge f+1$, ensuring the presence of at least one honest replica in every quorum intersection. This property is necessary for agreement and becomes impossible when $n \le 3f$; hence the bound is tight.
\end{proof}
Each request $r$ is identified by a triple $(v,s,d)$ consisting of the view number $v$, a sequence number $s$, and digest $d=\mathrm{Digest}(r)$. For each $(v,s)$ the protocol ensures that all correct replicas that execute at position $s$ select the same digest. The pBFT algorithm is executed in 3 phases: $pre-prepare$, $prepare$ and $commit$.

\subsubsection{Pre-prepare} 
 The algorithm has a leader called the primary, $p$, responsible for starting the protocol. In this phase, the primary assigns a sequence number, $s$, to the request it received from the client, and broadcasts it to all the nodes in the network. This phase is for the nodes to acknowledge that a request with a sequence number exists and to validate the state replication. The primary of view $v$, denoted $p(v)$, broadcasts $(\mathsf{PRE\text{-}PREPARE}, v, s, d)$. An honest replica accepts $(v,s,d)$ only if it is in view $v$, the sequence number is valid, and it has not accepted a conflicting digest for $(v,s)$.

\subsubsection{Prepare}  
Upon accepting the pre-prepare request, the nodes now broadcast their validation to the network. A node is set to enter the prepared stage if it observes $2f + 1$ prepare messages. At this stage, the nodes know that a quorum of the network has validated the request. Therefore, upon accepting the pre-prepare request, replica $i$ broadcasts $\langle \mathsf{PREPARE}, v, s, d, i \rangle$.

\begin{definition}[Prepared]
\label{def:prepared}
A replica is prepared for $(v,s,d)$ once it has accepted the corresponding pre-prepare and has received prepare messages supporting $(v,s,d)$ from at least $2f$ distinct replicas. Equivalently, it has evidence of a quorum of size $2f+1$ supporting $(v,s,d)$.
\end{definition}

\subsubsection{Commit}  
After nodes enter the prepared state, they broadcast the commit message to the network. A node is set to enter the commit stage if it observes more than $2f + 1$ commit messages. At this stage, the node knows that a quorum of the network has committed the request and it is said to be committed. A prepared replica broadcasts $\langle \mathsf{COMMIT}, v, s, d, i \rangle$.

\begin{definition}[Committed-Local]
\label{def:committed-local}
A replica is committed-local for $(v,s,d)$ once it is prepared for $(v,s,d)$ and has received at least $2f+1$ commit messages supporting $(v,s,d)$ from distinct replicas.
\end{definition}

A replica that is committed-local for $(v,s,d)$ executes $r$ at sequence number $s$ and applies it to its state machine.

\begin{figure}[h]
    \centering
    \includegraphics[width=1\linewidth]{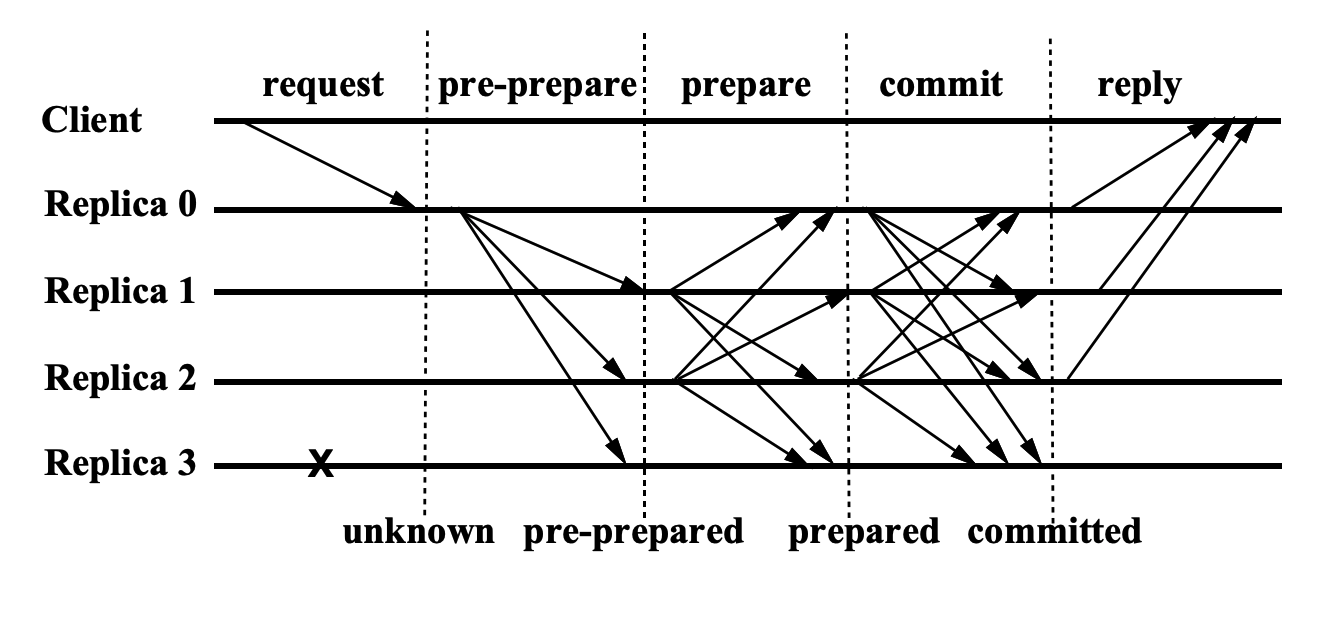}
    \caption{Normal case execution with Replica 0 as leader \cite{pbft}}
    \label{fig:pbft_diagram}
\end{figure}

\begin{theorem}[Safety]
\label{thm:pbft-safety}
No two correct replicas can commit different digests at the same sequence number.
\end{theorem}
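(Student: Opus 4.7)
The plan is to proceed by contradiction. Suppose two correct replicas $R$ and $R'$ commit distinct digests $d \neq d'$ at the same sequence number $s$. By Definition \ref{def:committed-local}, each is committed-local, and in particular prepared, so by Definition \ref{def:prepared} there exist quorums $Q$ and $Q'$, each of size $2f+1$, whose members issued prepare messages supporting $(v,s,d)$ and $(v',s,d')$ respectively, for some views $v$ and $v'$. I would split the argument into two cases according to whether $v=v'$.

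In the same-view case $v=v'$, the heart of the argument is the quorum-intersection bound already used in Theorem \ref{thm:pbft-bound}. Since $|Q|+|Q'| = 2(2f+1) = 4f+2 > n + f$ whenever $n \le 3f+1$, and the total replica count is $n \ge 3f+1$, inclusion–exclusion yields $|Q \cap Q'| \ge f+1$. At most $f$ of those intersecting replicas can be Byzantine, so some honest replica $h$ lies in $Q \cap Q'$. By the pre-prepare acceptance rule stated after Theorem \ref{thm:pbft-bound}, an honest replica accepts at most one digest for each pair $(v,s)$; hence $h$ could not have signed prepares for both $d$ and $d'$, a contradiction.

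The cross-view case $v < v'$ is the genuine obstacle, and the step I expect to require the most care, because the excerpt states the three normal-case phases (pre-prepare, prepare, commit) but does not formalize the view-change machinery that actually transfers commitments across views. I would invoke the standard pBFT view-change invariant: to install view $v'$, the new primary must assemble a \emph{new-view certificate} consisting of $2f+1$ signed view-change messages, each reporting the highest prepared $(v,s,\cdot)$ observed by its sender together with a supporting prepare quorum. By quorum intersection with the prepare quorum $Q$ from view $v$ that prepared $d$, at least one honest contributor to the new-view certificate witnessed and therefore reported the prepared tuple $(v,s,d)$. The protocol's view-change rule then forces the primary of $v'$ (and every honest replica validating the new-view message) to re-propose $d$ at sequence $s$, so no honest replica can accept a pre-prepare for $d' \neq d$ at $(v',s)$. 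This prevents the quorum $Q'$ from ever forming among honest replicas, contradicting the assumption that $R'$ became committed-local for $d'$.

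Combining both cases rules out the hypothesized divergence, so no two correct replicas commit different digests at the same sequence number, completing the proof. I would close by remarking that the tightness of $n \ge 3f+1$ is inherited directly from Theorem \ref{thm:pbft-bound}: without it, the $f+1$ honest-overlap guarantee underpinning both cases fails, and the safety argument collapses.
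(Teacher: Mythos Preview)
Your argument is correct and, in fact, considerably more complete than the paper's own treatment. The paper offers only a two-line proof sketch: it intersects the two commit quorums of size $2f{+}1$, observes that the intersection contains an honest replica, and concludes from the fact that an honest replica will not issue commit messages for two different digests at the same $(v,s)$. Notably, the paper never distinguishes the same-view from the cross-view case and does not invoke the view-change machinery at all; its sketch is effectively only the $v=v'$ argument, applied to commit rather than prepare quorums. Your decomposition into $v=v'$ (handled by quorum intersection on prepare quorums) and $v<v'$ (handled by the view-change invariant that forces a new primary to re-propose any value with an outstanding prepare certificate) is the standard Castro--Liskov structure and is what is actually needed for a full proof. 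One small refinement for your cross-view step: the set to intersect with the view-change quorum is not the prepare quorum $Q$ itself but the set of at least $f{+}1$ \emph{honest prepared} replicas derived from the $2f{+}1$ commit messages that made $R$ committed-local, since a replica in $Q$ merely sent a prepare and need not itself hold a full prepare certificate to report in its view-change message.
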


\begin{proof}[Proof sketch]
If replicas commit $(v,s,d)$ and $(v',s,d')$ with $d \neq d'$, each commit is supported by a quorum of size $2f+1$. Their intersection contains at least one correct replica. A correct replica cannot emit commit messages for two different digests at the same $(v,s)$, hence, $d=d'$.
\end{proof}

\begin{theorem}[Liveness]
\label{thm:pbft-liveness}
Under partial synchrony, pBFT eventually selects a correct primary through the view-change mechanism, after which every valid client request is eventually pre-prepared, prepared, committed, and executed.
\end{theorem}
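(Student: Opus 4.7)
The plan is to leverage the partial synchrony assumption together with the view-change subprotocol to argue progress in two stages: first, that the protocol eventually enters a view with a correct primary, and second, that within any such view every pending client request traverses all three phases and is executed. I would begin by recalling that after $\mathrm{GST}$ any message exchanged between honest replicas is delivered within the known bound $\Delta$. Each honest replica maintains a view-change timer that starts when a request is accepted into its log and fires if no progress is observed. If the current primary is Byzantine or silent, the $2f+1$ honest replicas will each time out and broadcast $\langle \mathsf{VIEW\text{-}CHANGE}, v+1 \rangle$. Once the prospective primary of view $v+1$ collects $2f+1$ such messages, it issues a $\mathsf{NEW\text{-}VIEW}$ message that installs the new view on every honest replica, carrying forward any pre-prepared requests of the previous view.

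I would then argue that this rotation necessarily reaches a correct primary. Since at most $f$ of the $n \ge 3f+1$ replicas are Byzantine (by Theorem \ref{thm:pbft-bound}), among any $f+1$ consecutive views at least one primary is correct. The subtle part, which I would handle explicitly, is that the view-change timer is increased each time a view change fires (standard exponential or additive back-off). This ensures that after finitely many view changes the timer exceeds $\Delta$, so that once $\mathrm{GST}$ has elapsed honest replicas no longer trigger spurious view changes faster than messages propagate. At that point all honest replicas stabilize in a common view whose primary is correct.

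With such a view in place, I would trace a pending valid request $r$ with digest $d$ at sequence number $s$ through the protocol. The correct primary broadcasts $\langle \mathsf{PRE\text{-}PREPARE}, v, s, d \rangle$, which reaches every honest replica within $\Delta$; each accepts it and broadcasts $\langle \mathsf{PREPARE}, v, s, d, i \rangle$, so within another $\Delta$ every honest replica collects $2f+1$ matching prepares and becomes prepared in the sense of Definition \ref{def:prepared}. Each then emits $\langle \mathsf{COMMIT}, v, s, d, i \rangle$, and after a further $\Delta$ every honest replica is committed-local (Definition \ref{def:committed-local}) and executes $r$, establishing liveness for that request.

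The main obstacle will be the timer-calibration argument. Demonstrating that the view-change timeout eventually dominates $\Delta$ is delicate because the adversary may delay $\mathrm{GST}$ arbitrarily and schedule Byzantine behavior to induce repeated view changes before stability is reached; the proof must show that the back-off schedule grows without bound while remaining compatible with the safety invariants preserved by $\mathsf{NEW\text{-}VIEW}$. Once this calibration step is handled, the remaining arguments reduce to straightforward bounded-delay accounting under synchrony.
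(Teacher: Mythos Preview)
Your sketch is correct and is, in fact, the standard Castro--Liskov liveness argument: rotate views until a correct primary is reached (pigeonhole over at most $f$ Byzantine replicas among $f+1$ consecutive primaries), use the growing view-change timeout to ensure that after $\mathrm{GST}$ honest replicas stop abandoning a correct primary prematurely, and then do bounded-delay accounting through the three phases. The identification of the timer-calibration step as the only non-mechanical part is accurate.

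The paper, however, does not prove this theorem at all; it is stated as background, with the reader implicitly referred to~\cite{pbft}. So there is nothing to compare against: your proposal simply supplies the argument that the paper omits. If anything, you have gone further than the paper requires, since Theorems~\ref{thm:pbft-bound} and~\ref{thm:pbft-safety} receive only brief proof sketches and Theorem~\ref{thm:pbft-liveness} receives none.
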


Together, safety and liveness establish pBFT as a complete Byzantine fault tolerant replication protocol under the $n \ge 3f+1$ bound and partial synchrony. Its reliance on a fixed, fully active replica set stands in contrast to consensus models with dynamic or intermittently online participants. We next examine the Sleepy Model of Consensus, which relaxes these assumptions and enables agreement in settings with fluctuating node availability.

\subsection{Sleepy Model of Consensus}

In permissionless blockchain settings, replicas need not remain continuously active. Pass and Shi \cite{sleepy} formalize this by introducing the notion of \emph{sleepy} behavior, in contrast to the fully synchronous participation assumed by classical BFT protocols such as pBFT.

\begin{definition}[Sleepy Node {\cite{sleepy}}]
A replica $i$ is \emph{sleepy} if at each time step it may be either \emph{online} (awake) or \emph{offline} (asleep), where the online/offline state may change arbitrarily throughout the execution. An offline replica produces no messages and, from the perspective of the protocol, is indistinguishable from an adversarial replica.
\end{definition}

Protocols based on static quorum assumptions such as pBFT, which requires $n \ge 3f+1$ and therefore an online honest quorum of size $\frac{2}{3}n$, lose liveness whenever the set of online honest replicas falls below the quorum threshold. Thus classical BFT consensus is not robust under dynamic participation.
\\\\
Pass and Shi introduce a consensus model in which the relevant security parameter is the fraction of \textit{honest online} replicas at each time step. Their protocol achieves safety provided that adversarial online participation never exceeds the honest-online majority and achieves liveness under partial network synchrony given the same condition. This relaxes the static quorum requirement and makes consensus compatible with fluctuating availability.
The Ouroboros proof-of-stake protocol \cite{ouroboros} later termed this robustness property \textit{dynamic availability}.

\begin{definition}[Dynamic Availability]
A consensus protocol is \emph{dynamically available} if it maintains liveness (i.e., honest transactions are eventually confirmed) whenever, at every time step, the set of online honest replicas has a majority over online adversarial replicas.
\end{definition}

Dynamic availability does not relax safety: consistency holds only while the honest-online majority assumption remains valid. Once adversarial online participation dominates, both Sleepy Consensus and Ouroboros may experience safety violations due to adversarial control of leader election or chain extension.

\subsection{Finality and Adaptivity}

Since protocols like pBFT do not assume dynamic participation, there exists zero chance that a confirmed block can be rolled back even if a majority of nodes crash and the network is halted. This property is termed as finality. Protocols like Nakamoto consensus do not have this property.
\\\\
Nakamoto demonstrated in the Bitcoin whitepaper\cite{bitcoin} that the probability that an attacker can re-org a block decreases exponentially as new blocks are confirmed on the canonical chain. This sort of guarantee is called probabilistic finality. We follow the framework of Lewis-Pye and Roughgarden \cite{resourcepool} to formalize the notions of finality and adaptivity for permissionless consensus protocols.

\paragraph{Confirmation, liveness, and security.}
Let $P = (I,O,C)$ denote an \emph{extended protocol}, where $I$ is the instruction set executed by honest nodes, $O$ is the permitter oracle governing when a node is allowed to publish state updates (blocks, votes, etc.), and $C$ is a \emph{confirmation rule}. For any message state $M$ (the set of all messages delivered to a node), $C(M)$ returns a chain of \emph{confirmed} blocks, depending on a security parameter $\varepsilon>0$.

For an honest node $U$ and timeslots $t_1<t_2$, let $M_i$ be $U$'s message state at time $t_i$, and write $C(M_i)$ for its confirmed chain. The interval $[t_1,t_2]$ is a \emph{growth interval} for $U$ if
\[
  |C(M_2)| \;>\; |C(M_1)|,
\]
that is, if the number of confirmed blocks strictly increases.

\begin{definition}[Liveness {\cite[Def.~3.1]{resourcepool}}]
A protocol is \emph{live} if, for every $\varepsilon>0$, there exists $\ell_\varepsilon \in \mathbb{N}$ such that the following holds with probability at least $1-\varepsilon$ in the underlying probabilistic model: for every honest user $U$ and every pair of timeslots $t_1<t_2$ with $t_2-t_1\ge \ell_\varepsilon$, if the interval $[t_1,t_2]$ is entirely synchronous, then $[t_1,t_2]$ is a growth interval for $U$.
\end{definition}

Intuitively, liveness requires that, during sufficiently long synchronous periods, the set of confirmed blocks continues to grow for every honest node. Two blocks are \emph{compatible} if one is an ancestor of the other; otherwise they are \emph{incompatible}. If a block $B$ belongs to $C(M)$ for the message state $M$ of user $U$ at time $t$, we say that $B$ is \emph{confirmed for $U$ at time $t$}.

\begin{definition}[Security {\cite[Def.~3.3]{resourcepool}}]
A protocol is \emph{secure} if for every $\varepsilon>0$, for all honest users $U_1,U_2$ and all times $t_1,t_2$, with probability at least $1-\varepsilon$ every block confirmed for $U_1$ at time $t_1$ is compatible with every block confirmed for $U_2$ at time $t_2$.
\end{definition}

Security formalizes the idea that confirmation defines a single coherent ledger prefix, up to negligible failure probability.

\paragraph{Finality.}
The framework distinguishes between synchronous and \emph{partially synchronous} network models. In the partially synchronous setting, the duration is partitioned into synchronous and asynchronous intervals, with the adversary controlling message delivery during asynchronous intervals. A protocol has \emph{finality} precisely when its confirmation rule remains secure even in the presence of unbounded network failures after confirmation.

\begin{definition}[Finality {\cite[Def.~3.4]{resourcepool}}]
A protocol \emph{has finality} if it is secure in the partially synchronous setting.
\end{definition}

Classical BFT-style protocols such as pBFT and Algorand are designed so that, once a block is confirmed (e.g., by collecting a quorum of committee signatures), the probability that it is ever revoked is bounded by the security parameter, even if honest participation ceases or the network subsequently partitions for an arbitrarily long time. In this formal sense, they achieve finality.
\\\\
By contrast, longest-chain Nakamoto-style protocols (e.g., Bitcoin) do not satisfy finality: confirmation is only secure under ongoing participation of an honest majority in a sufficiently synchronous regime. As shown in the Bitcoin whitepaper~\cite{bitcoin}, for an attacker with a resource fraction $q<\tfrac12$, the probability that a $k$-deep block can be rolled back decays exponentially in $k$, yielding \emph{probabilistic finality} but not finality in the above sense.

\paragraph{Resource pools, adaptivity, and the sized/unsized distinction.}
Protocols in~\cite{resourcepool} are analyzed relative to a \emph{resource pool} $R$, which assigns to each public key $U$ and timeslot $t$ a nonnegative resource balance $R(U,t,M)$ (e.g., hashrate in PoW or stake in PoS), possibly depending on the message state $M$. Let
\[
  T(t,M) \;=\; \sum_U R(U,t,M)
\]
denote the total resource at time $t$ in state $M$.

Two resource settings are distinguished:
\begin{itemize}
  \item \emph{Sized setting:} the total resource $T$ is a predetermined function of $(t,M)$ available to the protocol and permitter.
  \item \emph{Unsized setting:} $R$ (and hence $T$) is undetermined, subject only to coarse bounds $T(t,M)\in[I_0,I_1]$ and adversarial resource bounds.
\end{itemize}

Informally, PoS protocols with on-chain stake (e.g., Algorand) are naturally modeled in the sized setting, while PoW protocols (e.g., Bitcoin) live in the unsized setting, where the total hashrate can fluctuate unpredictably.

\begin{definition}[Adaptivity {\cite[Def.~3.2]{resourcepool}}]
A protocol is \emph{adaptive} if it is live in the unsized setting.
\end{definition}

Thus, a protocol is adaptive if it remains live without knowing the total resource $T(t,M)$. It automatically adjusts to arbitrary (but bounded) changes in participation. Longest-chain protocols like Bitcoin and Snow White are adaptive in this sense: as long as some honest resource remains, blocks continue to be produced and confirmed at some rate without explicit reconfiguration.

\paragraph{Impossibility of combining finality and adaptivity.}
Lewis-Pye and Roughgarden prove a CAP-style impossibility theorem:

\begin{theorem}[Impossibility of adaptivity and finality {\cite[Thm.~4.1]{resourcepool}}]
\label{thm:cap}
No protocol is both adaptive and has finality.
\end{theorem}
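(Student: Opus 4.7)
The plan is to proceed by contradiction using a standard CAP-style partition argument. I would assume that $P$ is a protocol that is simultaneously adaptive and has finality, and then construct two executions whose local views are indistinguishable to honest users but which jointly force a security violation.

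First I would fix a security parameter $\varepsilon>0$ and a baseline resource level $T_0$ chosen so that $2T_0 \in [I_0, I_1]$. In execution $E_1$, the network is fully synchronous throughout, all nodes are honest, and the resource pool satisfies $T(t,M) = T_0$. Because $P$ is adaptive, the liveness definition applied in the unsized setting guarantees that after $\ell_\varepsilon$ synchronous slots the confirmed chain of every honest user $U$ strictly grows, so some block $B$ is confirmed for $U$.

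Next I would construct execution $E_2$ with $T(t,M)=2T_0$, in which the adversary opens a long asynchronous interval and partitions the honest nodes into two groups, each controlling resource $T_0$, with all cross-group messages delayed. Within each group, the adversary delivers inputs and permitter-oracle responses so that the local transcript is distributed identically to the corresponding $E_1$ transcript. Applying adaptivity to the local view inside each half forces confirmation of some block $B_1$ in group~$1$ and some block $B_2$ in group~$2$; by choosing inputs that split on an extending block (for instance two distinct proposals at the same height), the adversary can guarantee that $B_1$ and $B_2$ are incompatible. The adversary then heals the partition and returns the network to synchrony. Because $P$ has finality, neither confirmed block can be revoked, so both $B_1$ and $B_2$ remain confirmed for their respective honest users. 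This directly contradicts security: two honest users hold confirmed, incompatible blocks. Hence no such $P$ can exist.

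The main obstacle is justifying the indistinguishability claim inside each partition of $E_2$. Concretely, I would need to argue that the permitter oracle $O$, when operating in the unsized setting, cannot take the true total $T(t,M)$ as an input when deciding who may publish, so its decisions on each honest node depend only on messages that node has actually received; combined with the fact that an asynchronously-partitioned replica is indistinguishable from a silent one, this lets the adversary emulate the ``other half'' as permanently absent. Pinning down that the resource pool, the permitter, and the confirmation rule $C$ all admit such a local-view simulation in the $(I,O,C)$ formalism of \cite{resourcepool} is the delicate step; once that indistinguishability is precise, liveness in each half and finality across the partition-heal combine mechanically to yield the contradiction.
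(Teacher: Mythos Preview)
Your proposal is correct and follows essentially the same CAP-style indistinguishability argument that the paper sketches: the paper's ``proof'' is only an intuition paragraph stating that, in the unsized partially synchronous setting, a long asynchronous period is locally indistinguishable from a drop in the total resource pool, so adaptivity forces continued confirmation while finality forbids it during partitions. Your two-execution construction with $T_0$ versus $2T_0$ and a partition into two halves is exactly a concrete instantiation of that intuition, and your identification of the permitter-oracle locality in the unsized setting as the delicate step is on point; you have simply filled in more detail than the survey itself provides.
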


The intuition is that, in the unsized and partially synchronous setting, an extended period of network asynchrony is indistinguishable (from the protocol’s local perspective) from a sharp drop in the total resource pool. Liveness in the unsized setting forces the protocol to keep confirming blocks even during such periods, while security (and hence finality) forbids confirmation during sufficiently bad partitions. These two requirements are incompatible, yielding Theorem~\ref{thm:cap}.

\paragraph{Consequences for pBFT and Nakamoto-style protocols.}
This theorem mathematically formalizes the informal dichotomy:

\begin{itemize}
  \item BFT-style protocols (including pBFT-like protocols and Algorand) operate in a sized setting and achieve finality, but they are \emph{not} adaptive: if participation drops below the designed threshold (e.g., insufficient committee signatures), the protocol halts and liveness fails.
  \item Longest-chain Nakamoto-style protocols (Bitcoin, Snow White, etc.) are adaptive in the unsized setting but, by Theorem~\ref{thm:cap}, cannot have finality. They only offer probabilistic finality in the sense of exponentially decreasing reorg probability as confirmations accumulate.
\end{itemize}

In particular, a protocol cannot simultaneously enjoy the “never reorg after confirmation, even if everyone goes offline” guarantee of pBFT-style finality and the “keeps making progress under arbitrary participation fluctuations” guarantee of Nakamoto-style adaptivity. Any design must choose a point on this fundamental trade-off.

\subsection{Ebb n Flow Protocols}

Since the properties of availability and finality cannot co-exist in a single consensus protocol, Nue, Tas and Tse \cite{ebbnflow} formulated a new class of flexible consensus protocols called \textit{ebb-n-flow protocols}.
\begin{definition}[Ebb-n-Flow Protocols]
    An \textit{ebb-n-flow protocol} is a flexible consensus protocol that supports a \textit{conservative} client and an \textit{aggressive} client. The conservative client trusts a \textit{finalized} ledger $\text{LOG}_{\text{fin}}$ which is safe under network partition and the aggressive client trusts an \textit{available} ledger $\text{LOG}_{\text{da}}$ that is live under dynamic participation.
\end{definition}
\begin{figure}[h]
    \centering
    \includegraphics[width=1\linewidth]{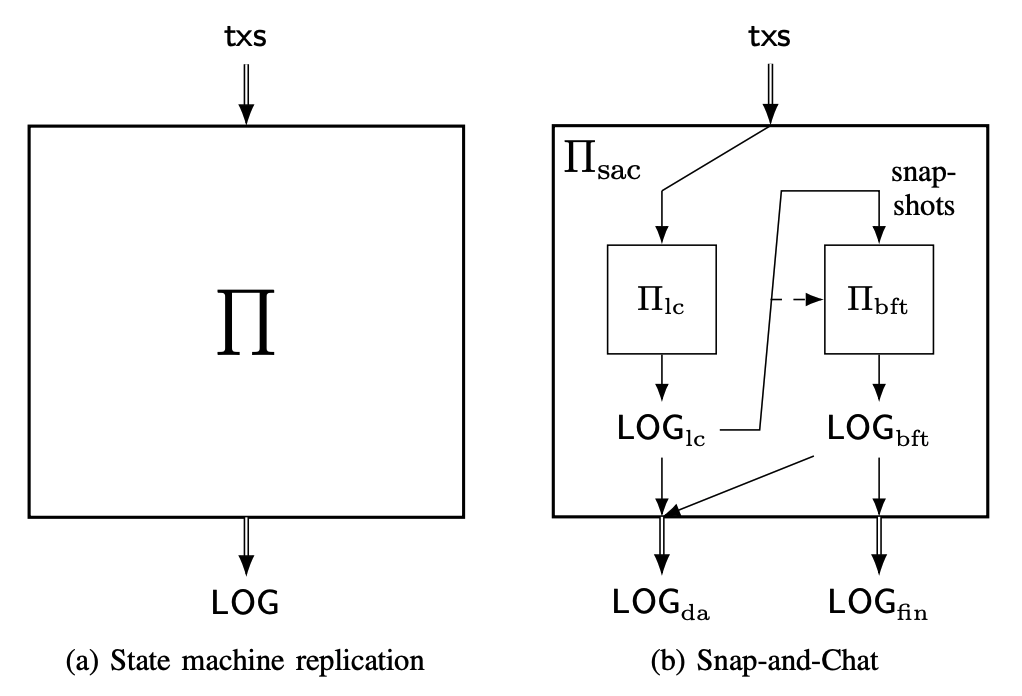}
    \caption{(a) uses a simple SMR protocol $\Pi$ to output a single ledger $LOG$ (b) uses an ebb-n-flow protocol $\Pi_{sac}$ made up of an availability protocol $\Pi_{lc}$ and a finality protocol $\Pi_{bft}$ which outputs an availability $LOG_{da}$ and finality $LOG_{fin}$ ledger respectively \cite{ebbnflow}}
    \label{fig:ebbnflow_diagram}
\end{figure}
In such a protocol, $\text{LOG}_{\text{fin}}$ is a prefix of $\text{LOG}_{\text{da}}$, i.e, $\text{LOG}_{\text{fin}}$ must always be a contiguous, initial segment of $\text{LOG}_{\text{da}}$. The finalized ledger may stagnate during network partitions, but catches up to the availability ledger when the network stabilizes. This model removes the need for a protocol-wide mandate on security or liveness, enabling clients to independently select the ledgers for desired properties.
\\\\
Ethereum's consensus protocol \textit{Gasper} is an ebb-n-flow protocol that uses LMD-GHOST as its availability protocol and Casper as its finality gadget. It considers a partially synchronous network, where honest nodes are sleepy until \textit{global awake time} $\text{GAT}$, \cite{ebbnflow} after which all nodes are awake. It also assumes that adversaries are always awake. Then
\begin{itemize}
    \item $\text{LOG}_{\text{fin}}$ is safe at all times and live after $\text{max(GST, GAT)}$, provided that less than 33\% of nodes are adversarial.
    \item If $\text{GST}=0$, $\text{LOG}_{\text{da}}$ is safe and live at all times, provided that less than 50\% of awake nodes are adversarial.
\end{itemize}

\section{The Ethereum Consensus Protocol}

Ethereum is the most decentralized blockchain in the world. It uses a Proof of Stake model as its consensus protocol. This model is called \textit{Gasper} \cite{gasper}, an ebb-n-flow protocol which can provide safety, plausible liveness, and probabilistic liveness under different assumptions. Gasper combines \textit{Casper} (the friendly-finality gadget) \cite{casper} and \textit{LMD-GHOST} (fork-choice rule).
\\\\
In the Ethereum Proof of Stake network, only nodes who stake into the network can participate in the consensus protocol. These nodes are called \textit{validators}. Validators communicate with the network by broadcasting messages with their digital signatures that can be referred as \textit{attestations}.
\\\\
Let's look at these algorithms individually and then understand how they combine together to give Ethereum the its safety and liveness guarantees.

\subsection{LMD-GHOST: The Fork-Choice Rule of Gasper}

LMD-GHOST is the fork-choice rule for the Ethereum consensus protocol. It comprises of two acronyms that stand for "Latest Message Driven" and "Greedy Heaviest-Observed Sub-Tree". LMD-GHOST is an adaptation of the GHOST protocol introduced by Sompolinsky and Zohar \cite{ghost}.

\begin{definition}[Fork-Choice Rule]
    A \textit{fork-choice rule} is an algorithm which, when given a view $V$ of all proposed blocks, deterministically outputs a block $B_V$ which is considered as the head of the canonical chain. Formally, we can say that a fork-choice rule $f$ for a given $V$ returns a chain from the genesis block $B_{genesis}$ to the leaf block $B_V$.
\end{definition}
For example, the \textit{longest-chain rule} used in the Nakamoto consensus is a fork-choice rule that returns a $B_V$ which is furthest from $B_{genesis}$.
\\\\
Each validator in the Ethereum network has its own local view, and they make attestations on blocks that they consider valid according to their view. Each attestation on a block can be considered as a \textit{vote} which has a \textit{weight}.

\begin{definition}[Weight]
    The \textit{weight} of a single vote or attestation on a block is the \textit{effective balance} of the validator which cast the vote. For a given $V$, let $M$ be a set of all \textit{latest} attestations on a block $B$ or descendants of $B$, then weight $w(V, B, M)$ is defined as the cumulative weight of each individual vote in $M$. Note that for any given $B$ a validator can only cast a single vote.
\end{definition}

LMD-GHOST is a Greedy algorithm that runs recursively on a tree. Starting at $B_{genesis}$ or any block on which the rule is to be run, we select the \textit{heaviest} branch descending from it, i.e. we choose the child block from the root which has the highest weight. In a situation where two or more child blocks have the same weight, the block with the greatest block-hash will be chosen. We then repeat the process recursively with the chosen child block as the new root until we reach a leaf. For a given $V$, this leaf block $B_V$ is the output of the algorithm and will be considered the head of the chain.
\\\\
LMD-GHOST is used as a fork-choice rule to output $\text{LOG}_\text{da}$ but it does not provide finality. Validators in the network can run LMD-GHOST to get $B_V$ but there exists a non-zero probability that $B_V$ can reorg. Gasper applies a finality gadget called Casper FFG on $\text{LOG}_\text{da}$ to output a prefix ledger $\text{LOG}_\text{fin}$ which is considered \textit{finalized}.

\subsection{Casper: The Friendly Finality Gadget}

Casper FFG is a meta-protocol that can be used on top of a consensus protocol to achieve finality. Casper operates in an asynchronous network and can tolerate at most $1/3$ of validators being faulty just like pBFT. 
\\\\
Casper FFG measures the chain in terms of \textit{height}. Each block $B$ in a tree has a height $h_B$, which is defined by its distance from the root or $B_{genesis}$. The genesis block has a height of 0. Casper FFG introduces the concepts of \textit{justification} and \textit{finalization} of blocks which is analogous to the \textit{prepare} and \textit{commit} stages in pBFT. More accurately, Casper FFG finalizes \textit{checkpoints}.

\begin{definition}[Checkpoints]
    A \textit{checkpoint} is defined as a block whose height is a multiple of some constant $H$ and $B_{genesis}$ is the first checkpoint. Two checkpoints are said to be \textit{conflicting} if and only if they are nodes in distinct branches.
\end{definition}

Now that we understand checkpoint blocks, we can formally define \textit{attestation}, a term that we have been using a bit loosely until now.

\begin{definition}[Attestation]
    An \textit{attestation} is a digitally signed message which contains a link from one checkpoint to another $s \rightarrow t$ where $s$ and $t$ are not conflicting checkpoints and $h_t > h_s$.
\end{definition}

Here, $s$ is called the \textit{source} checkpoint and $t$ is called the \textit{target} checkpoint. A link $s \rightarrow t$ is said to be a \textit{supermajority link} if at least $\frac{2}{3}$ of validators (by deposit) have made an attestation with $s$ as the source and $t$ as the target.

\begin{itemize}
    \item A checkpoint $t$ is said to be \textit{justified} if there exists a supermajority link $s \rightarrow t$ where $s$ is a \textit{justified} checkpoint.

    \item A checkpoint $t$ is said to be \textit{finalized} if $t$ is \textit{justified} and their exists a supermajority link $s \rightarrow t$ where $s$ is a \textit{finalized} checkpoint and $h_s = h_t + 1$.

    \item $B_{genesis}$ is both \textit{justified} and \textit{finalized}.
\end{itemize}

In this setting, a validator in the network is said to violate the protocol if it makes two distinct attestations corresponding to two links $s_1 \rightarrow t_1$ and $s_2 \rightarrow t_2$ such that

\begin{itemize}
    \item $h_{t_1}=h_{t_2}$
    \item $h_{s_1}<h_{s_2}<h_{t_2}<h_{t_1}$
\end{itemize}

If a validator violates any of the two conditions, its stake in the protocol is \textit{slashed}. This means that the above two conditions are never true as long as less than $\frac{1}{3}$ of the validators are adversarial.
\\\\
With the above conditions, Casper FFG proves two fundamental properties:
\begin{theorem}[Accountable Safety]
    Two conflicting checkpoints $A$ and $B$ cannot be simultaneously \textit{finalized} unless more than $\frac{1}{3}$ of total validator stake has been slashed.
\end{theorem}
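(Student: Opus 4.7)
The plan is to proceed by contradiction. Assume conflicting checkpoints $A$ and $B$ are simultaneously finalized while at most $1/3$ of the total validator stake has been slashed. Without loss of generality, suppose $h_A \le h_B$. By the definition of finalization, $A$ being finalized yields a supermajority link $A \to C$ with $C$ on $A$'s branch and $h_C = h_A + 1$; in particular $C$ is itself justified. By the definition of justification, $B$ sits at the end of a chain of justified checkpoints $B_{\mathrm{genesis}} = J_0, J_1, \dots, J_k = B$ on $B$'s branch, each $J_i$ being the target of a supermajority link $J_{i-1} \to J_i$.

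Next I would split on how this chain interacts with the heights $h_A$ and $h_A+1$. If some $J_i$ on $B$'s branch has $h_{J_i} = h_A$, then $A$ and $J_i$ are distinct justified checkpoints at equal height (distinct because $A$ and $B$ conflict and $J_i$ lies on $B$'s branch), each the target of a supermajority link. By the standard quorum-intersection argument against two sets of attestations of weight at least $2/3$, at least $1/3$ of the stake signed both attestations, hence signed two distinct attestations with $h_{t_1} = h_{t_2}$. This violates the first slashing condition and contradicts the assumption. An analogous argument applies if some $J_i$ has $h_{J_i} = h_A+1$, using $C$ in place of $A$.

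Otherwise no $J_i$ sits at height $h_A$ or $h_A+1$. Since the chain runs from height $0$ to height $h_B \ge h_A+1$, there must exist consecutive indices $i-1, i$ with $h_{J_{i-1}} < h_A$ and $h_{J_i} > h_A+1$. The supermajority link $J_{i-1} \to J_i$ then strictly surrounds the finalization link $A \to C$, since $h_{J_{i-1}} < h_A < h_C < h_{J_i}$. Quorum intersection against the $A \to C$ attestation set again produces at least $1/3$ of the stake whose validators signed two links satisfying $h_{s_1} < h_{s_2} < h_{t_2} < h_{t_1}$, violating the second slashing condition. Either case contradicts the hypothesis that less than $1/3$ of stake has been slashed.

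The main obstacle is the case split itself: one must verify that the chain justifying $B$ cannot slip past the height window $\{h_A, h_A+1\}$ without leaving a slashable trace, so that every admissible configuration falls under one of the two slashing rules. Once the case analysis is pinned down, the rest is a routine quorum-intersection calculation, namely that any two attestation sets of weight at least $2/3$ overlap in weight at least $1/3$, which forces the offending behavior on more than $1/3$ of the validator stake.
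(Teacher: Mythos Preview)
The paper does not give its own proof of this theorem; it simply records it as a property of Casper FFG established in~\cite{casper}. Your argument is exactly the standard proof from that source: follow the chain of supermajority links justifying $B$, and show by a height case split against the finalization link $A\to C$ that either two distinct justified checkpoints occur at the same height (first slashing rule) or some link in the chain strictly surrounds $A\to C$ (second slashing rule), with quorum intersection then pinning the violation on at least a third of the stake. The reasoning is correct, including the observation that every $J_i$ in the justification chain must be an ancestor of $B$ and hence distinct from $A$ and $C$.

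One cosmetic caveat: with ``supermajority'' defined in the paper as \emph{at least} $2/3$, the inclusion--exclusion bound on the overlap is $\ge 1/3$, not $>1/3$. So your argument strictly recovers the original Casper phrasing (``at least $1/3$ slashed'') rather than the paper's ``more than $1/3$''; the discrepancy is in the paper's statement, not in your proof. You also silently (and correctly) used the intended finalization condition $h_C = h_A+1$, which the paper's definition garbles.
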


\begin{theorem}[Plausible Liveness]
    New checkpoints can always be \textit{finalized} as long as new blocks are being produced and more than $\frac{2}{3}$ of total validator stake follows the protocol.
\end{theorem}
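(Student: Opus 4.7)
The plan is to exhibit an explicit honest attestation schedule that produces a new finalized checkpoint without triggering either slashing clause. Since the theorem asserts only possibility, it suffices to display one such schedule and verify that it is compatible with every past honest attestation.

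First I would pin down the state. Let $J$ denote the highest justified checkpoint in the current view, and let $h^{*}$ denote the largest target height appearing in any prior attestation cast by an honest validator. Because honest validators follow the protocol's voting rule of attesting only from an already-justified source, and because the set of justified checkpoints grows monotonically over time, every past honest attestation has source height at most $h_{J}$. Next I would pick a fresh target: using the assumption that new blocks continue to be produced, choose a checkpoint $t$ descending from $J$ whose height $h_{t}$ strictly exceeds $\max(h^{*}, h_{J})$. The schedule is then for every honest validator to broadcast the attestation $J \rightarrow t$.

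The bulk of the work is verifying the slashing clauses against each honest validator's history. The equal-target clause is excluded by $h_{t} > h^{*}$. The surround clause must be ruled out in both orientations: if the new attestation were the outer one, the clause would require a past attestation with source height strictly between $h_{J}$ and $h_{t}$, contradicting the bound on honest source heights; if it were the inner one, it would require a past target height greater than $h_{t}$, contradicting the choice of $h_{t}$. With more than $2/3$ of stake honest, the resulting attestations form a supermajority link $J \rightarrow t$, so $t$ becomes justified. I would then iterate once more, choosing a descendant checkpoint $t'$ of $t$ at the next checkpoint height and applying the same schedule with source $t$; the resulting supermajority link $t \rightarrow t'$ between consecutive checkpoint heights finalizes $t$.

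The main obstacle will be the source-bound step in the middle of the argument, namely the claim that every past honest attestation has source height at most $h_{J}$. This rests on the implicit protocol rule that honest validators never attest from a source that is not already justified, combined with the monotonicity of justification. Without this rule, nothing in the slashing conditions alone prevents a past honest attestation from carrying a source at some intermediate height strictly greater than $h_{J}$, which would render the new outer link $J \rightarrow t$ slashable under the surround clause and collapse the construction. Making this rule explicit, and confirming that it is indeed the honest behavior assumed in the theorem, is the pivot on which the entire liveness argument turns.
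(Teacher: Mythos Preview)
Your argument is correct and is essentially the standard plausible-liveness proof from the original Casper FFG paper: take the highest justified checkpoint $J$ and the highest honest target height $h^{*}$, pick a fresh descendant checkpoint beyond both, verify that the new link $J\to t$ collides with no prior honest link under either slashing clause, and iterate once to obtain a consecutive-height supermajority link that finalizes $t$. Your identification of the pivot---that honest sources are always already-justified, hence bounded by $h_J$---is exactly the point on which the surround-clause check rests, and you handle both orientations of the surround condition correctly.

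There is, however, nothing in the present paper to compare against: this is a survey, and the theorem is simply quoted from \cite{casper} without proof. So the relevant comparison is with Buterin and Griffith's original argument, and your proposal tracks it faithfully. Two small remarks worth tightening: first, when you take $J$ to be ``the highest justified checkpoint in the current view,'' you are implicitly working in a global view (or assuming all honest validators share it), which is fine for a possibility claim but should be stated; second, the paper's own finalization clause is garbled (it writes $h_s = h_t + 1$ with $s$ finalized, which is inconsistent with its own attestation definition), so your use of the standard Casper rule---source justified, direct-child target, supermajority link---is the right reading, and you may want to flag that you are using the intended definition rather than the one printed here.
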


\subsection{Gasper: Ethereum's Ebb-n-Flow Protocol}

Gasper is the result of the combination of LMD-GHOST and Casper FFG. It is the Proof of Stake protocol used by Ethereum validators to achieve consensus on proposing and finalizing blocks. In Proof of Stake, the voting power (from which validator vote weight is calculated) is proportional to the amount of ETH staked by the validator into the protocol.
\\\\
Gasper does not process messages and attestations from all validators in any given stage before proceeding to the next. This is because the Ethereum network has a very large set of validators which make the traditional pBFT type finality infeasible.
\\\\
Gasper sets up time in units of \textit{slots} and \textit{epochs}, and divides the entire validator set into \textit{committees}. An epoch has 32 slots, each 12 seconds long. At the start of each epoch, the entire validator set is randomly divided into 32 committees of roughly equal size. After each epoch, the validators are shuffled between all the committees. Validators from each committee are scheduled to cast a vote in each slot which ensures that the entire validator set casts a unique vote per epoch.
\\\\
Note that this is slightly different from Casper FFG which defined checkpoints in terms of block height. Gasper on the other hand deals with epochs which can be considered as checkpoints where $H=32$.
\begin{definition}[Epoch Boundary Pairs]
    In Gasper, blocks are chosen to play the role of what \textit{checkpoints} do in Casper. One block is chosen per epoch but it is possible that multiple epochs have the same block. To distinguish the block, for an epoch $j$ and block $B$, we define an \textit{epoch boundary pair} $(B, j)$.
\end{definition}

Epoch boundary pairs, or \textit{pairs} for short, act as the checkpoints in Gasper. These pairs get \textit{justified} and \textit{finalized} just like checkpoints i.e. when a supermajority link between 2 pairs is established. The reason for this slight modification between Casper and Gasper is because Casper is a finality gadget that can be applied on a blockchain with \textit{probabilistic liveness} (in good synchrony conditions) i.e. blocks are always being produced. Gasper wanted to have less assumptions and wanted the finality gadget to work even under asynchronous conditions i.e. block production is halted.
\\\\
Note that in Gasper, when a validator makes a vote, it simultaneously makes a GHOST vote and a Casper attestation. The GHOST vote is made on the block in the slot corresponding to the committee of the validator. And it also makes a link between the last justified pair to the latest epoch boundary pair. The block of this pair must be a parent of the block the validator is making a GHOST vote for.
\\\\
Gasper also uses a slightly modified fork-choice rule HLMD GHOST (which stands for Hybrid LMD-GHOST). HLMD deviates by considering the latest justified pair $(B_J, j)$ as the root of the algorithm which outputs leaf $B_l$. In order to ensure that this view is "frozen", the algorithm does not run with the whole view $V$ but a filtered or auxiliary version $V''$. $V''$ eliminates all branches where $B_l$ is not in the set of branches where $(B_j, j)$ is a part of the justified view.
\\\\
With these slight changes, Gasper ensures that Ethereum can run a Proof of Stake protocol with Accountable Safety and Plausible Liveness. It also guarantees Probabilistic Liveness under synchronous conditions.

\subsection{Catastrophic Crash}

Gasper is supposed to provide finality to the protocol as long as more than $\frac{2}{3}$ of the network is honest and online. However, if there is a case when more than $\frac{1}{3}$ suffers a crash fault (or coordinates an attack) then block finalization halts. The network still produces blocks, but new epoch pairs will not be finalized because a supermajority link cannot be established.
\\\\
In such a catastrophic situation, Ethereum enters an \textit{inactivity leak} mode. This is triggered if an epoch boundary pair has not been finalized for some minimum predefined number of epochs. During an inactivity leak, inactive validators start to get penalized. The basic concept is that as inactive stake in the network is drained, the proportion of stake that is active increases until a supermajority link can be established, after which epoch pairs can get finalized.

\subsection{Attacks on Gasper}

Gasper requires anywhere from 64 to 95 slots to finalize blocks. This means that any transaction on Ethereum can take 12.8 to 19 minutes. On average it takes 12-15 minutes to finalize any given block. This is a long amount of time during which blocks are subject to reorganizations. There exists a possibility for validators to execute a \textit{reorg attack} to extract MEV which can disincentivize honestly following the protocol.
\\\\
Balancing attacks have been discovered \cite{attacks} that can allow a validator controlling a small fraction of total network stake and no control over network delay to achieve long-range reorg attacks.
\\\\
A reorg attack typically refers to an attempt by an adversary to fork out a block it has observed. This is predominant in PoW blockchains where the adversary wants to get the MEV in the block for themselves. This is called a \textit{ex post} attack. However, Gasper allows for a type of reorg attack called \textit{ex ante} attack. Here, an adversary can reorg future blocks that are unknown to the adversary at the time of the attack. This can be done to reorg $k$ number of blocks (also called $k-reorg$), which can give the adversary $12k$ seconds to propose a block and achieve more MEV.
\begin{figure}[h]
    \centering
    \includegraphics[width=1\linewidth]{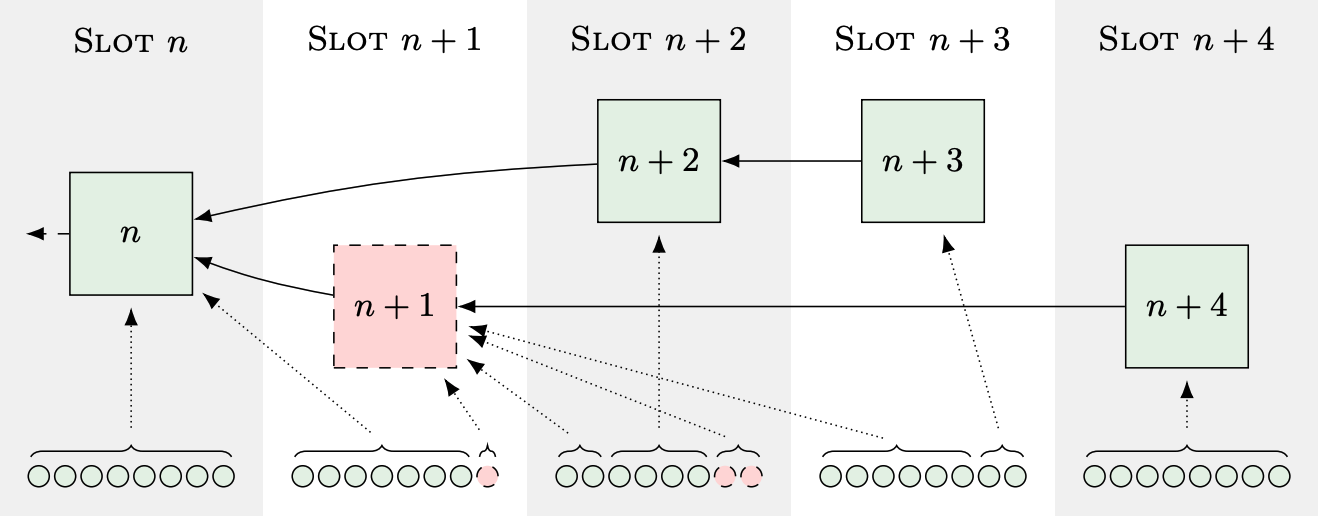}
    \caption{Example of a 2-reorg attack \cite{attacks}}
    \label{fig:attacks_diagram}
\end{figure}
A chain reorganization of this type can be executed even when only probabilistic network delays are assumed. Here is a detailed example outlining a 2-block reorganization (2-reorg):
\begin{enumerate}
    \item  In slot $n + 1$, the adversary privately proposes block $n+1$ atop block $n$ and casts a corresponding attestation for it. Both the block and the attestation are withheld to facilitate a subsequent reorganization.
    \item The proposer of slot $n + 2$, lacking visibility of the adversarial block $n + 1$, builds atop block $n$. The adversary then executes a 'balancing attack' by releasing block $n + 1$ and the withheld attestation mid-slot. This forces a split in the slot $n + 2$ attestation committee: one fraction votes for block $n + 2$ (the public head), while the other fraction observes the adversarial weight and pivots to block $n + 1$.
    \item  In slot $n+3$, the proposer builds atop block $n+2$ based on their current view of the chain. The adversary then publishes two latent attestations to the network; since these are processed by a majority of the slot $n+3$ committee prior to attesting, the LMD-GHOST weight tips in favor of the $n+1$ branch. This forces a majority vote for block $n+1$ as the current head.
    \item By slot $n + 4$, the fork-choice rule favors the $n + 1$ branch, leading the proposer to extend block $n + 1$ with block $n + 4$. This concludes the reorganization attack, where the sequence $\{n+2, n+3\}$ is supplanted by the adversarial branch $\{n+1, n+4\}$, resulting in the orphaning of the former.
\end{enumerate}

If an adversary has control of network delays, then $2k-1$ adversarial validators are enough to pull off a k-reorg attack. If the network delay is probabilistic instead, a small fraction of adversarial validators are enough to pull off an attack.

\section{Path towards Single-Slot Finality}

Since the merge, Ethereum has been producing blocks without any downtime. During the same period, its stablecoin market cap has doubled. With all its success, the underlying protocol Gasper still has several issues. Its susceptibility to long range balancing attacks (like the ex ante attack we discussed) and long finality times are points of concern.
\\\\
There is active research on consensus protocols to improve properties of Gasper or replace it with something better. Improvements on dynamically available protocols like LMD-GHOST can help us unlock better properties for the underlying protocol. The initial goal of the foundation with these efforts is to achieve finality within a single block or \textit{single-slot finality}.

\subsection{Goldfish: A drop in replacement for LMD-GHOST}

The balancing and ex-ante attacks possible on Gasper are primarily because not all attesters share the same view as the proposer at the time of the attestation. To address these issues, a new dynamically available protocol was proposed by D'Amato, Nue, Tas and Tse called \textit{Goldfish} \cite{goldfish}.
\\\\
Goldfish was proposed as a drop-in replacement for LMD-GHOST in Gasper which came with security guarantees in the sleepy model for \textit{reorg resilience}. It works on the concepts of \textit{message buffering} (also known as \textit{view-merge}) and \textit{vote expiry}.
\begin{definition}[Message Buffering]
    In LMD-GHOST, all messages received by a validator comprise of their view. With \textit{message buffering}, messages received by a validator $V$ enter a buffer $\mathcal{B}_V$. These messages can then be carefully timed for inclusion in the local view $\mathcal{T}_V$ which leads to the protocol being reorg resilient.
\end{definition}
\begin{definition}[Vote Expiry]
    \textit{Vote expiry} means that in any given slot $t$, only votes from the slot $t-1$ will be considered to run the GHOST fork-choice rule on. This prevents adversaries from accumulating weight across committees.
\end{definition}

Assuming an incurred delay of $\Delta$, Goldfish consists of synchronized slots of $3t\Delta$ time. This way, a slot is divided into 3 phases which for some slot $t$ are defined as follows:
\begin{enumerate}
    \item \textit{PROPOSE} ($3t\Delta$): In this phase, the validator that is chosen to be the proposer runs the GHOST rule using only votes from slot $t-1$. It also \textit{merges} its $\mathcal{T}$ with its $\mathcal{B}$. It then broadcasts a \textit{proposal} containing the proposed block $B_{leader}$ and the merged view $\mathcal{T}'$.
    \item \textit{VOTE} ($3t\Delta+\Delta$): In this phase, all the validators in subcommittee acknowledge the leader and merge $\mathcal{T}'$ into their own $\mathcal{T}_V$. They then run the GHOST rule using only votes from slot $t-1$ to find the tip and broadcast their vote.
    \item \textit{COMMIT} ($3t\Delta+2\Delta$): In this phase, all validators merge their $\mathcal{B}_V$ into their $\mathcal{T}_V'$ and then finally run GHOST rule only with votes from slot $t$ to output the confirmed ledger.
\end{enumerate}

These properties combined ensure that Goldfish is not prone to any balancing or long-range reorg attacks. It is also composable with finality gadgets like Casper and hence can act as a drop-in replacement for LMD-GHOST in Gasper.
\\\\
However, Goldfish cannot be considered practical because of the brittle nature of vote expiry during temporary asynchronous periods. Even a single slot of asynchrony can cause catastrophic failure as the votes arrive in the buffer after being expired. Hence, Goldfish is not \textit{asynchrony resilient} which is a crucial property for any dynamically available protocol.

\subsection{RLMD-GHOST: Providing Asynchrony Resilience}

Since it is possible for an adversary to exploit the votes of an offline validator in LMD-GHOST, it cannot be considered fully dynamically available. Goldfish's attempt to restore dynamic availability led to the protocol not being asynchrony resilient.
\\\\
To address the challenges of asynchrony resilience, D'Amato and Zanolini introduced RLMD-GHOST \cite{rlmd-ghost} which is a dynamically available protocol with asynchrony resilience. Note that challenges to asynchrony arise only because of the strict vote expiry proposed in Goldfish. RLMD-GHOST relaxes the vote expiry to balance dynamic availability with asynchrony resilience. 
\\\\
It is important to note that balancing attacks with accumulation of weight are only possible due to \textit{subsampling} of the validator set in LMD-GHOST. In order to prevent the re-introduction of ex-ante attacks in RLMD-GHOST, it does away with subsampling, opting for a single validator set that votes in a slot.
\\\\
Recent Latest Message Driven GHOST is introduced as a family of \textit{propose-vote-merge} protocols that are parameterized by a vote expiry period $\eta$. We can see that when we set $\eta=1$, the protocol reduces to a variant of Goldfish without subsampling. And when we set $\eta=\infty$, the protocol reduces to a variant of LMD-GHOST without subsampling and with view-merge.
\\\\
D'Amato and Zanolini also specify the joining protocol for the propose-vote-merge family \cite{rlmd-ghost}. When a sleepy validator $V$ wakes up in slot $t$, it receives all the messages while it was asleep and adds them to its buffer $\mathcal{B}_V$. It then waits for the next merge phase $3t\Delta+2\Delta$ to merge its $\mathcal{B}_V$ into its view $\mathcal{T}_V$. Until this point, the validator is considered \textit{dreamy} \cite{goldfish} and only after the merge is it allowed to participate in the protocol as a regular \textit{awake} node.
\\\\
RLMD-GHOST is a family of protocols that allow us to balance between asynchrony resilience and dynamic availability by tweaking the parameter $\eta$. As $\eta$ grows, the protocol becomes resilient to longer periods of asynchrony but it also allows the inclusion of votes of offline validators up to $\eta$ slots old, affecting the degree of dynamic availability.

\subsection{Generalized Sleepy Model}

RLMD\text{-}GHOST is analyzed in the \emph{generalized} sleepy model introduced by D'Amato and Zanolini~\cite{rlmd-ghost}, which refines the original Pass--Shi sleepy model~\cite{sleepy} by accounting for the age of votes that may be safely incorporated into a validator’s view. This refinement allows us to quantify precisely how the vote-expiry parameter $\eta$ influences both \emph{dynamic availability} and \emph{asynchrony resilience}.
\\\\
Time is divided into slots. Each validator $V$ may be in one of three modes: awake, offline and dreamy. The transitions are defined as follows:

\begin{itemize}
    \item \textbf{Awake:} Validator $V$ participates fully in the protocol. It produces votes in slot $t$, receives all messages sent during slot $t$ with delay at most $\Delta$, and merges its buffer into its view at designated merge points.
    \item \textbf{Offline:} Validator $V$ produces no messages in slot $t$. From the network perspective it is indistinguishable from an adversary.
    \item \textbf{Dreamy:} Upon waking up at the beginning of a slot $t$, validator $V$ receives all messages it missed while offline, placing them into its buffer $\mathcal{B}_V$, but it cannot yet vote. It transitions to \emph{awake} only after the merge phase of slot $t$.
\end{itemize}

This induces the following fundamental constraint: an honest vote cast in slot $s$ becomes \emph{usable} by an honest validator that wakes up in slot \(t\) only if 
\[
t - s \le \eta,
\]
where $\eta$ is the protocol’s vote-expiry bound.

The original sleepy model~\cite{sleepy} requires only that the majority of \emph{online} validators remain honest in each slot. The generalized model quantifies how much ``stale'' information can be safely incorporated into the fork-choice rule via $\eta$.

\begin{definition}[Degree of Dynamic Availability]
\label{def:degree-da}
A protocol has \emph{degree-$\eta$ dynamic availability} if, for every slot $t$, liveness is preserved whenever the set of honest validators that were awake at least once in the window
\[
[t-\eta,\, t-1]
\]
have a majority over adversarial validators that were awake in the same window.
\end{definition}

In other words, the protocol can tolerate validators being offline for up to $\eta$ slots, while still treating their votes as ``fresh'' and counting them toward honest weight during fork-choice execution. Increasing $\eta$ strictly increases the set of adversarial-free executions in which liveness is preserved but simultaneously expands the adversary’s ability to exploit stale votes.

This trade-off is the basis for the RLMD-GHOST design space: setting
\[
\eta = \pi + 2
\]
gives resilience against asynchronous periods of length $\pi$ while preserving dynamic availability for validators that go offline for at most $\eta$ slots, but choosing $\eta$ too large reintroduces long-range balancing attacks known from subsampled LMD-GHOST.

\subsection{Single-Slot Finality on Ethereum}

Gasper still takes between 64 and 95 slots to finalize blocks. This is a huge concern for several applications which require faster confirmations. Building on top of RLMD-GHOST, D'Amato and Zanolini proposed a simple single-slot finality protocol \cite{ssf} which combines the synchronous dynamically available protocol with a partially synchronous finality gadget.
\\\\
The protocol uses a hybrid version of RLMD-GHOST as its fork choice rule, called $HFC$. It is similar to RLMD-GHOST but it starts from the latest justified block $LJ(\mathcal{T})$ in $\mathcal{T}$ instead of $B_{genesis}$. The finality gadget is similar to Casper FFG, where a supermajority link between two checkpoints justifies and finalizes the target and source respectively.
\\\\
The proposed protocol is divided into 4 phases:
\begin{enumerate}
    \item \textit{PROPOSE} ($4t\Delta$): In this phase, the validator that is chosen to be the proposer runs the $HFC$ rule using votes from slots $[t-\eta,t-1]$. It also \textit{merges} its $\mathcal{T}$ with its $\mathcal{B}$. It then broadcasts a \textit{proposal} containing the proposed block $B_{leader}$ and the merged view $\mathcal{T}'$. 
    \item \textit{HEAD-VOTE} ($4t\Delta+\Delta$): In this phase, the validators merge $\mathcal{T}'$ into their own $\mathcal{T}_V$. They then run the HFC rule using votes from slots $[t-\eta,t-1]$ to find the tip and broadcast their HEAD-VOTE (similar to GHOST vote).
    \item \textit{FFG-VOTE} ($4t\Delta+2\Delta$): In this phase, the validators runs HFC rule to find tip $B_{t}$ with a supermajority of votes from slot $t$ for fast confirmation. Then the validator casts an FFG-VOTE linking $LJ(\mathcal{T'})$ to $B_{t}$ and broadcasts it.
    \item \textit{MERGE} ($4t\Delta+3\Delta$): In this phase, all validators merge their $\mathcal{B}_V$ into their $\mathcal{T}_V'$.
\end{enumerate}
\begin{figure} [h]
    \centering
    \includegraphics[width=1\linewidth]{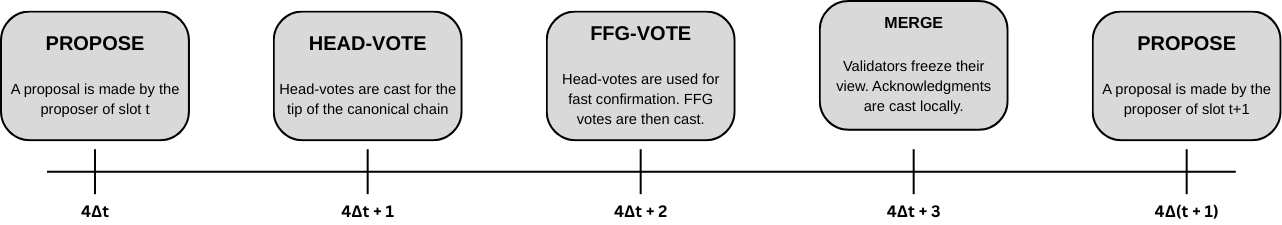}
    \caption{Four phases of slot $t$ with SSF}
    \label{fig:ssf}
\end{figure}
This way, we achieve a secure and reorg resilient ebb-n-flow that can finalize block in a single slot. Theoretically it can replace Gasper as Ethereum's consensus protocol and allow single slot finality and strong security guarantees to the network. However, due to the large number of validators in the Ethereum network, the time it takes to aggregate all validator signatures makes the implementation of a single slot finality protocol impractical.

\section{Endgame for Fast Finality}

The simple single slot finality protocol aims to replace Gasper with a Tendermint like protocol which also has availability guarantees due to its ebb-n-flow nature and inclusion of "inactivity leak". The most challenging aspect of this proposal is that each validator broadcasts two messages per slot. 
\\\\
As Vitalik points out in his blog \cite{vitalik}, \\
\textit{``There is a deep philosophical reason why epoch-and-slot architectures seem to be so hard to avoid: it inherently takes less time to come to approximate agreement on something, than to come to maximally-hardened "economic finality" agreement on it."}
\\\\
As the number of nodes in the network increases, which is a key metric of decentralization and an important factor for Ethereum, the amount of time required for gathering signatures (aggregating and distributing) becomes the limiting factor for slot times. This means that a voting phase takes longer (double the normal network latency) than a non-voting phase in a slot. Hence, it is beneficial to explore all forms of slot and epoch based architectures. This is where Vitalik also retcons the acronym SSF to "Speedy Secure Finality" to allow a larger design space for faster finality.
\\\\
Reducing slot times also falls under this goal. One of the most well understood protocols that aim at providing fast finality while also reducing slot times is \textit{3-Slot Finality protocol} or \textit{3SF} proposed by D’Amato, Saltini, Tran and Zanolini. \cite{3sf} The original SSF protocol had two voting phases per slot while 3SF proposes just one voting phase per slot while maintaining all safety and liveness guarantees of SSF.
\\\\
To clarify the structural commonality between recent fast-finality designs, we present a
high-level propose-vote-merge abstraction. This template is not intended as a protocol
specification, but as a normal form capturing the minimal control parameters that differentiate
existing constructions.

\begin{figure}[t]
\centering
\fbox{
\begin{minipage}{0.95\linewidth}
\textbf{Propose--Vote--Merge Template for Speedy Secure Finality}

\vspace{0.5em}
\textbf{State:} local view $\mathcal{T}$, message buffer $\mathcal{B}$ \\
\textbf{Parameters:} vote window $\eta \in \mathbb{N}\cup\{\infty\}$, voting rounds $r \ge 1$

\vspace{0.75em}
\textbf{PROPOSE:}
\begin{itemize}
  \item Merge buffered messages: $\mathcal{T} \leftarrow \mathsf{Merge}(\mathcal{T},\mathcal{B})$
  \item Compute head: $B \leftarrow \mathsf{ForkChoice}(\mathcal{T},\eta)$
  \item Broadcast $\mathsf{Propose}(B,\mathcal{T})$
\end{itemize}

\textbf{VOTE ($r$ rounds):}
\begin{itemize}
  \item Collect votes from slots $[t-\eta,\,t-1]$
  \item Update head / justification state
  \item Broadcast vote message
\end{itemize}

\textbf{MERGE:}
\begin{itemize}
  \item Finalize view $\mathcal{T}$
  \item Discard expired messages from $\mathcal{B}$
\end{itemize}
\end{minipage}
}
\caption{A unifying propose--vote--merge abstraction underlying modern fast-finality protocols.}
\label{fig:pvm}
\end{figure}

Within this abstraction, Gasper corresponds to an epoch-subsampled instantiation with effectively
unbounded vote expiry. Single-slot finality corresponds to $r=2$ with bounded $\eta$, while
3-slot finality collapses the voting phases to $r=1$. This perspective makes explicit that 3SF
reduces per-slot communication without introducing new consensus primitives.





\subsection{3-Slot Finality (3SF)}

The 3SF protocol represents a deliberate retreat from the objective of single-slot finality in favor of a design that is compatible with the dominant engineering bottleneck of large-scale proof-of-stake systems: signature aggregation latency. Rather than optimizing for the minimum theoretical finalization depth, 3SF optimizes for slot length under realistic communication assumptions, while preserving the accountable safety and liveness guarantees of SSF.
\\\\
At a high level, 3SF merges the \textit{head vote} and \textit{FFG vote} phases of SSF into a \emph{single voting phase per slot}. This reduction removes one full signature-aggregation round from the critical path of each slot. The resulting protocol finalizes blocks deterministically within three consecutive slots under synchrony and sufficient honest participation.
\\\\
In SSF, a slot contains two voting phases, each of which under aggregation requires approximately $2\Delta$ time. Consequently, the effective slot length is dominated by voting rather than proposal or merge phases. 3SF observes that finality latency and slot latency are distinct optimization targets: sacrificing same-slot finality allows one to shorten the slot itself, which in turn reduces MEV extraction windows and improves throughput-sensitive applications such as AMMs.
\\\\
Time is divided into slots indexed by $t$, each of duration $4\Delta$, subdivided into four phases:

\begin{enumerate}
    \item $\mathsf{PROPOSE} ([4t\Delta,\,4t\Delta+\Delta)$).  
    The slot proposer computes the available chain head using a dynamically available fork-choice rule like RLMD-GHOST\footnote{The original 3SF paper described two protocols, one which used RLMD-GHOST as it's fork-choice rule and another using a quorum based dynamically available protocol called TOB-SVD\cite{tobsvd}. The quorum based variant was later spun out from 3SF with some algorithmic tweaks as an independent protocol called Majorum\cite{majorum}.} applied to its frozen view, and broadcasts a proposal extending that chain. As in SSF, vote expiry applies to the \emph{combined} vote described below.

    \item $\mathsf{VOTE} ([4t\Delta+\Delta,\,4t\Delta+2\Delta)$).  
    Each active validator emits a \emph{single} vote message that simultaneously serves two roles:
    \begin{itemize}
        \item a head vote for the dynamically available chain, and
        \item an FFG-style vote linking the greatest justified checkpoint in its view to the voted head.
    \end{itemize}
    Formally, a validator broadcasts
    \[
        \textsf{vote}(ch,\, C_s \rightarrow C_t),
    \]
    where $C_s$ is the greatest justified checkpoint and $C_t=(ch,t)$ is a target checkpoint for the current slot.

    \item $\mathsf{CONFIRM} ([4t\Delta+2\Delta,\,4t\Delta+3\Delta)$).  
    Validators check whether a supermajority of votes supports a consistent chain prefix. If so, the locally confirmed chain is updated, provided it is a prefix of the available chain.

    \item $\mathsf{MERGE} ([4t\Delta+3\Delta,\,4(t+1)\Delta)$).  
    Buffered messages are merged into the local view, finalizing the state for the next slot.
\end{enumerate}

Let $B_t$ denote the block proposed in slot $t$.
\begin{itemize}
    \item In slot $t+1$, if a supermajority link $B_t \rightarrow B_{t+1}$ is observed, $B_t$ becomes \emph{justified}.
    \item In slot $t+2$, if $B_{t+1}$ is justified and a supermajority link $B_{t+1} \rightarrow B_{t+2}$ is formed, then $B_t$ is finalized.
\end{itemize}
Thus, blocks proposed by honest validators are finalized within three slots, independent of proposer honesty in intermediate slots.
\begin{figure}
    \centering
    \includegraphics[width=1\linewidth]{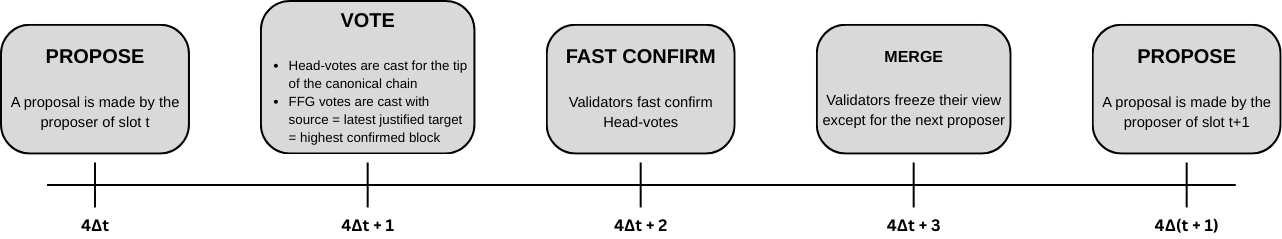}
    \caption{Four phases of slot $t$ with 3SF}
    \label{fig:3sf}
\end{figure}
\\\\
A subtle but fundamental difference between 3SF and Gasper or SSF arises from the merged vote structure. In Casper FFG and SSF, it is impossible for two distinct checkpoints from the same slot to be justified. In 3SF, this invariant no longer holds: validators may justify multiple checkpoints per slot unless constrained.

To restore accountable safety, 3SF introduces an additional monotonicity condition.

\begin{definition}[3SF Monotonicity Slashing Condition]
A validator is slashable if it produces two votes corresponding to links
\[
    s_1 \rightarrow t_1 \quad \text{and} \quad s_2 \rightarrow t_2
\]
such that
\begin{enumerate}
    \item $\mathrm{slot}(s_1) < \mathrm{slot}(s_2)$, and
    \item $\mathrm{height}(t_2) < \mathrm{height}(t_1)$.
\end{enumerate}
\end{definition}

Intuitively, this condition enforces that a validator’s justified targets must advance monotonically with time. This rule has no analogue in Casper because Casper’s voting structure makes such regressions impossible by construction.
\\\\
With this additional constraint, 3SF achieves the same accountable safety and plausible liveness guarantees as SSF and Gasper under the standard $f < \tfrac{n}{3}$ fault assumption. Conflicting finalized chains imply slashing of at least one-third of total stake, while liveness holds during sufficiently long synchronous periods with adequate honest participation.
\\\\
A key conceptual insight exposed by 3SF is that \emph{finality depth is not the sole driver of economic security}. Reducing slot duration can shrink MEV extraction windows even when finality itself is delayed by additional slots. In this sense, 3SF occupies a distinct point in the design space: it trades confirmation depth for temporal density of blocks.
\\\\
Moreover, 3SF demonstrates that finality gadgets need not be optimized solely for worst-case latency. Instead, they can be co-designed with aggregation and networking realities, yielding protocols that are asymptotically less elegant than SSF but strictly superior under real-world constraints.
\\\\
Viewed holistically, SSF, two-slot variants, and 3SF can be compared as follows:
\begin{itemize}
    \item SSF minimizes finality depth but maximizes per-slot communication.
    \item 3SF minimizes per-slot communication but increases finality depth.
    \item Intermediate designs interpolate between these extremes.
\end{itemize}

\subsection{Signature Aggregation as a Fundamental Bottleneck}
\label{subsec:sig-bottleneck}

We formalize the engineering limitation imposed by large validator sets and its implications for fast-finality protocols. We present a partial and intentionally abstract formalization whose role is heuristic rather than complete. It is used to guide intuition and structure the discussion of security and performance properties, without claiming coverage of all protocol variants or edge cases, particularly those arising in future 3SF-derived designs.

\begin{definition}[Validator Keys and Independent Nodes]
Let $k$ denote the number of active validator keys participating in consensus and $n$ denote the number of independently operated consensus nodes (i.e., independent failure domains)\footnote{Not to be confused with our initial definition of $n$ as total number of consensus nodes}. In Ethereum, $k \gg n$ due to stake-splitting across multiple validator keys operated by the same entity.
\end{definition}

\begin{definition}[Maximum Effective Balance]
The \emph{maximum effective balance} $B_{\max}$ is an upper bound on the voting weight of a single validator key. Any stake in excess of $B_{\max}$ must be represented via additional validator keys.
\end{definition}

At the Merge, Ethereum set $B_{\max} = 32$ ETH, implying that an entity staking $S$ ETH controls at least $\lceil S / 32 \rceil$ validator keys, regardless of whether these keys are operated on a single physical node.

\begin{theorem}[Signature Aggregation Lower Bound]
\label{thm:aggregation-bound}
In any slot-based proof-of-stake protocol where all validator keys vote in every slot, the time duration of a slot is lower-bounded by $\Omega(k)$ cryptographic signature aggregation.
\end{theorem}

\begin{proof}
Each validator key produces a distinct signature that must be (i) received, (ii) verified or aggregated, and (iii) redistributed to other participants. Even under optimistic assumptions i.e perfect aggregation trees, no adversarial delays, and constant-size aggregated signatures, the aggregation process must incorporate information from each of the $k$ keys. Therefore, the total aggregation work per slot grows at least linearly in $k$, establishing a lower bound of $\Omega(k)$.
\end{proof}
Theorem \ref{thm:fast-finality-infeasible} provides an empirical proof that fast finality protocols today are infeasible for large number of validators in Ethereum. Assume each validator has bounded inbound bandwidth $W$ (bits/sec) and bounded per slot CPU budget; i.e., there exists a constant $W$ such that no honest node can receive more than $W\tau$ bits within a slot.

\begin{theorem}[Infeasibility of Global-Vote Fast Finality under Bounded Bandwidth]
\label{thm:fast-finality-infeasible}
Let $k$ be the number of active validator keys and let $\tau>0$ be the fixed slot duration. Consider any slot-based proof-of-stake protocol in which a block can be confirmed or finalized in a slot only if an aggregate attestation incorporating votes from at least $qk$ distinct validator keys is produced in that slot, for some constant $q>0$.  

Assume that each honest participant has bounded inbound bandwidth $W$ (bits per second). If
\[
q k b \;>\; W \tau,
\]
where $b>0$ is a lower bound on the number of bits required per vote contribution, then such a protocol is infeasible. In particular, for $k \approx 10^{6}$ and $\tau=12$ seconds, global-vote fast-finality protocols without subsampling (including SSF- and 3SF-style constructions) are infeasible under current slot-time constraints.
\end{theorem}

\begin{proof}
For a block to be confirmed in a given slot, some honest participant must be able to determine that at least $qk$ distinct validator keys contributed valid votes in that slot. This requires receiving information sufficient to account for each of these $qk$ contributions.  

Let $b>0$ be a lower bound on the number of bits needed to convey one such contribution (e.g., a signature share plus identity or anti-equivocation metadata). Then any participant that verifies the aggregate must receive at least $qkb$ bits during the slot.  

Under the bounded-bandwidth assumption, no participant can receive more than $W\tau$ bits in a slot. If $qkb > W\tau$, this necessary condition cannot be met, and confirmation within the slot is impossible. The claim follows by direct contradiction.
\end{proof}

\begin{theorem}[Key Inflation without Decentralization Gain]
\label{thm:key-inflation}
Under a fixed maximum effective balance $B_{\max}$, increasing total staked ETH inflates the number of validator keys $k$ without increasing the number of independent consensus nodes $n$.
\end{theorem}

\begin{proof}
Each independent operator with stake $S$ must instantiate at least $\lceil S / B_{\max} \rceil$ validator keys. Increasing $S$ increases $k$ proportionally while leaving the operator count and thus the number of independent failure domains, unchanged. Therefore, $k$ grows independently of $n$.
\end{proof}

One of the ways to enable fast finality in this setup would be to increase the effective balance $B_{max}$ for validators. Notice that doing so will allow the validator keys to fall drastically and be closer to the number of real nodes in network today. Although doing so would be a practical challenge because of engineering and maintenance overhead faced by staker networks who manage these validator keys. Achieving this would also make the implementation of a fast finality protocol like 3SF feasible in the Ethereum network.

\begin{theorem}[Enabling Fast Finality via Effective Balance Scaling]
\label{thm:balance-scaling}
Increasing the maximum effective balance $B_{\max}$ reduces the per slot signature aggregation burden without weakening fault tolerance.
\end{theorem}

\begin{proof}
Consider an operator with total stake $S>0$. Under a maximum effective balance $B_{\max}$, the operator must instantiate at least
\[
k(S,B_{\max}) := \left\lceil \frac{S}{B_{\max}} \right\rceil
\]
validator keys. The function $k(S,B_{\max})$ is weakly decreasing in $B_{\max}$ for fixed $S$. Summing over all operators, increasing $B_{\max}$ induces a weakly decreasing total number of validator keys $k$ in the system.

The value of $n$ depends only on the partition of total stake across operators and is invariant under changes to $B_{\max}$. Hence, scaling $B_{\max}$ reduces $k$ while leaving $n$ unchanged.

Let $\alpha \in [0,1]$ denote the fraction of total stake controlled by adversarial operators. Safety and liveness guarantees of proof-of-stake consensus depend on $\alpha$ relative to fixed thresholds (e.g., $\alpha<1/3$), not on the number of validator keys used to represent that stake. Since increasing $B_{\max}$ leaves the stake distribution, and thus $\alpha$,unchanged, the fault-tolerance threshold is preserved.

By Theorem~\ref{thm:fast-finality-infeasible}, the per slot signature aggregation requirement scales at least linearly in $k$. Therefore, reducing $k$ via an increase in $B_{\max}$ strictly relaxes the aggregation constraint, enabling the feasibility of fast-finality protocols such as 3SF once $k$ is reduced to the same order of magnitude as $n$.

\end{proof}

EIP-7251\cite{maxB} raises $B_{\max}$ from 32 ETH to 2048 ETH, permitting a reduction of $K$ by up to a factor of 64 in the steady state. This shift moves Ethereum from a committee-mandated architecture toward one where global-vote fast-finality protocols are practically viable.

\section{Conclusion}

The pursuit of Speedy Secure Finality represents a pivotal shift in the evolution of permissionless consensus protocols. As we have detailed, the current state of Ethereum's consensus, Gasper, successfully creates a hybrid between dynamically available and finalized ledgers. However, its significant time-to-finality leaves the network exposed to ex-ante reorg attacks and limits the efficiency of economic settlement.
\\\\
Our analysis of recent advancements highlights a clear trajectory in solving these limitations. Goldfish demonstrated that reorg resilience is achievable through message buffering and vote expiry, though at the cost of brittleness to asynchrony. RLMD-GHOST refined this by introducing generalized sleepiness, offering a tunable trade-off between dynamic availability and asynchrony resilience. Building upon these primitives, Single-Slot Finality (SSF) protocols theoretically solve the latency problem entirely but introduce prohibitive communication overheads.
\\\\
We conclude that the 3-Slot Finality (3SF) protocol currently represents the most viable endgame for Ethereum. By pipelining the justification and finalization steps into a single voting phase per slot, 3SF strikes a pragmatic balance: it reduces finality and slot times without imposing the extreme bandwidth requirements of SSF. However, the feasibility of 3SF also relies on solving the signature aggregation bottleneck with an end goal of reducing the number of validator keys to match the number of unique independent nodes in the network as closely as possible.

\section{Future Work}

Current proposals like SSF and 3SF integrate finality and availability, tightly coupling economic guarantees with block production. The fundamental architecture of coupled slots and finality creates a bottleneck: effectively aggregating signatures from the full validator set requires either longer slot times or massive validator consolidation.
\\\\
A key direction for future research is the decoupling of these processes into asynchronous layers. In this architecture, an availability layer prioritizes liveness, using small committees to produce blocks rapidly, independent of the total network size. Simultaneously, a separate finality gadget aggregates votes from the full validator set on an independent timeline. 
\\\\
Vitalik points out such an architecture in a forum post\cite{decouple}, suggesting for committees of around 256 validators per slot rapidly fast confirm blocks with 6-8 second slot times, followed by a gadget finalizing blocks in under a minute with full validator set participation (without the need for extensive validator consolidation). The key concept of decoupling consensus is to treat $GHOST$ votes separate from $FFG$ votes.
\\\\
This separation mitigates the trade-off between validator set size and latency, allowing for fast block times despite high validator counts. Future work must formalize the security bounds of the availability layer, ensuring resistance to short-range reorgs before the slower finality mechanism confirms the chain state. 
\bibliographystyle{plain}
\bibliography{references}

@article{flpimposibility,
author = {Fischer, Michael J. and Lynch, Nancy A. and Paterson, Michael S.},
title = {Impossibility of distributed consensus with one faulty process},
year = {1985},
issue_date = {April 1985},
publisher = {Association for Computing Machinery},
address = {New York, NY, USA},
volume = {32},
number = {2},
issn = {0004-5411},
url = {https://doi.org/10.1145/3149.214121},
doi = {10.1145/3149.214121},
abstract = {The consensus problem involves an asynchronous system of processes, some of which may be unreliable. The problem is for the reliable processes to agree on a binary value. In this paper, it is shown that every protocol for this problem has the possibility of nontermination, even with only one faulty process. By way of contrast, solutions are known for the synchronous case, the “Byzantine Generals” problem.},
journal = {J. ACM},
month = apr,
pages = {374–382},
numpages = {9}
}

@inproceedings{sleepy,
  title={The sleepy model of consensus},
  author={Pass, Rafael and Shi, Elaine},
  booktitle={International conference on the theory and application of cryptology and information security},
  pages={380--409},
  year={2017},
  organization={Springer}
}

@inproceedings{ouroboros,
  title={Ouroboros: A provably secure proof-of-stake blockchain protocol},
  author={Kiayias, Aggelos and Russell, Alexander and David, Bernardo and Oliynykov, Roman},
  booktitle={Annual international cryptology conference},
  pages={357--388},
  year={2017},
  organization={Springer}
}

@article{bitcoin,
  title={Bitcoin: A peer-to-peer electronic cash system},
  author={Nakamoto, Satoshi},
  journal={Available at SSRN 3440802},
  year={2008}
}

@article{resourcepool,
  title={Resource pools and the cap theorem},
  author={Lewis-Pye, Andrew and Roughgarden, Tim},
  journal={arXiv preprint arXiv:2006.10698},
  year={2020}
}

@inproceedings{ebbnflow,
  title={Ebb-and-flow protocols: A resolution of the availability-finality dilemma},
  author={Neu, Joachim and Tas, Ertem Nusret and Tse, David},
  booktitle={2021 IEEE Symposium on Security and Privacy (SP)},
  pages={446--465},
  year={2021},
  organization={IEEE}
}

@inproceedings{ghost,
  title={PHANTOM GHOSTDAG: a scalable generalization of Nakamoto consensus: September 2, 2021},
  author={Sompolinsky, Yonatan and Wyborski, Shai and Zohar, Aviv},
  booktitle={Proceedings of the 3rd ACM Conference on Advances in Financial Technologies},
  pages={57--70},
  year={2021}
}

@inproceedings{attacks,
  title={Three attacks on proof-of-stake ethereum},
  author={Schwarz-Schilling, Caspar and Neu, Joachim and Monnot, Barnab{\'e} and Asgaonkar, Aditya and Tas, Ertem Nusret and Tse, David},
  booktitle={International Conference on Financial Cryptography and Data Security},
  pages={560--576},
  year={2022},
  organization={Springer}
}

@article{pbft,
  title={Practical byzantine fault tolerance and proactive recovery},
  author={Castro, Miguel and Liskov, Barbara},
  journal={ACM Transactions on Computer Systems (TOCS)},
  volume={20},
  number={4},
  pages={398--461},
  year={2002},
  publisher={ACM New York, NY, USA}
}

@article{gasper,
  title={Combining ghost and casper},
  author={Buterin, Vitalik and Hernandez, Diego and Kamphefner, Thor and Pham, Khiem and Qiao, Zhi and Ryan, Danny and Sin, Juhyeok and Wang, Ying and Zhang, Yan X},
  journal={arXiv preprint arXiv:2003.03052},
  year={2020}
}

@article{casper,
  title={Casper the friendly finality gadget},
  author={Buterin, Vitalik and Griffith, Virgil},
  journal={arXiv preprint arXiv:1710.09437},
  year={2017}
}

@article{goldfish,
  title={No more attacks on proof-of-stake ethereum?},
  author={D’Amato, Francesco and Neu, Joachim and Tas, Ertem Nusret and Tse, David},
  journal={arXiv preprint arXiv:2209.03255},
  year={2022}
}

@inproceedings{rlmd-ghost,
  title={Recent latest message driven ghost: Balancing dynamic availability with asynchrony resilience},
  author={D'Amato, Francesco and Zanolini, Luca},
  booktitle={2024 IEEE 37th Computer Security Foundations Symposium (CSF)},
  pages={127--142},
  year={2024},
  organization={IEEE}
}

@inproceedings{ssf,
  title={A simple single slot finality protocol for ethereum},
  author={D’Amato, Francesco and Zanolini, Luca},
  booktitle={European Symposium on Research in Computer Security},
  pages={376--393},
  year={2023},
  organization={Springer}
}

@online{vitalik,
  author  = {Buterin, Vitalik},
  title   = {Epochs and slots all the way down: ways to give Ethereum users faster transaction confirmation times},
  year    = {2024},
  url     = {https://vitalik.eth.limo/general/2024/06/30/epochslot.html}
}

@article{3sf,
  title={3-slot-finality protocol for ethereum},
  author={D'Amato, Francesco and Saltini, Roberto and Tran, Thanh-Hai and Zanolini, Luca},
  journal={arXiv preprint arXiv:2411.00558},
  year={2024}
}

@report{maxB,
  author      = {Neuder, Mike and Francesco and dapplion and Mikhail and Aditya and Drake, Justin and lightclient and Lange, Felix},
  title       = {EIP-7251: Increase the MAX\_EFFECTIVE\_BALANCE},
  type        = {Ethereum Improvement Proposals},
  number      = {7251},
  institution = {Ethereum Foundation},
  month       = {6},
  year        = {2023},
  url         = {https://eips.ethereum.org/EIPS/eip-7251},
  urldate     = {2025-12-22}
}

@article{gst,
  title={Consensus in the presence of partial synchrony},
  author={Dwork, Cynthia and Lynch, Nancy and Stockmeyer, Larry},
  journal={Journal of the ACM (JACM)},
  volume={35},
  number={2},
  pages={288--323},
  year={1988},
  publisher={ACM New York, NY, USA}
}

@inproceedings{bano2019sok,
  title={SoK: Consensus in the age of blockchains},
  author={Bano, Shehar and Sonnino, Alberto and Al-Bassam, Mustafa and Azouvi, Sarah and McCorry, Patrick and Meiklejohn, Sarah and Danezis, George},
  booktitle={Proceedings of the 1st ACM Conference on Advances in Financial Technologies},
  pages={183--198},
  year={2019}
}

@article{arafatsok,
  title={SoK: A Study of Blockchain Consensus Protocols},
  author={Arafat, Shymaa M}
}

@article{zhong2023byzantine,
  title={Byzantine fault-tolerant consensus algorithms: A survey},
  author={Zhong, Weiyu and Yang, Ce and Liang, Wei and Cai, Jiahong and Chen, Lin and Liao, Jing and Xiong, Naixue},
  journal={Electronics},
  volume={12},
  number={18},
  pages={3801},
  year={2023},
  publisher={MDPI}
}

@article{saleh2021blockchain,
  title={Blockchain without waste: Proof-of-stake},
  author={Saleh, Fahad},
  journal={The Review of financial studies},
  volume={34},
  number={3},
  pages={1156--1190},
  year={2021},
  publisher={Oxford University Press}
}

@article{chou2025minimmit,
  title={Minimmit: Fast finality with even faster blocks},
  author={Chou, Brendan Kobayashi and Lewis-Pye, Andrew and O'Grady, Patrick},
  journal={arXiv preprint arXiv:2508.10862},
  year={2025}
}

@article{anceaume2020finality,
  title={On finality in blockchains},
  author={Anceaume, Emmanuelle and Pozzo, Antonella and Rieutord, Thibault and Tucci-Piergiovanni, Sara},
  journal={arXiv preprint arXiv:2012.10172},
  year={2020}
}

@inproceedings{pan2021plume,
  title={Plume: Fast Finality Blockchain without Single Failure Point},
  author={Pan, Yeda and Pi, Bingfeng and Sun, Jun},
  booktitle={2021 2nd Asia Service Sciences and Software Engineering Conference},
  pages={18--27},
  year={2021}
}

@book{monge2003theories,
  title={Theories of communication networks},
  author={Monge, Peter R and Contractor, Noshir S},
  year={2003},
  publisher={Oxford university press}
}

@inproceedings{tobsvd,
  title={TOB-SVD: Total-Order Broadcast with Single-Vote Decisions in the Sleepy Model},
  author={D’Amato, Francesco and Saltini, Roberto and Tran, Thanh-Hai and Zanolini, Luca},
  booktitle={2025 IEEE 45th International Conference on Distributed Computing Systems (ICDCS)},
  pages={1033--1043},
  year={2025},
  organization={IEEE}
}

@article{majorum,
  title={Majorum: Ebb-and-Flow Consensus with Dynamic Quorums},
  author={D'Amato, Francesco and Saltini, Roberto and Tran, Thanh-Hai and Vonlanthen, Yann and Zanolini, Luca},
  journal={arXiv preprint arXiv:2601.03862},
  year={2026}
}

@online{decouple,
  author = {Vitalik Buterin},
  title = {LMD-GHOST with ~256 validators and a fast-following finality gadget},
  year = {2025},
  url = {https://ethresear.ch/t/lmd-ghost-with-256-validators-and-a-fast-following-finality-gadget/22856},
  note = {Ethereum Research Forum}
}

\end{document}